\renewcommand{\baselinestretch}{1.3}
\newcounter{Fig}[figure]
\newcounter{Tab}[table]
   \samepage\vspace{0.2cm}
\newcommand{\bqa}{\begin{eqnarray*}}
\newcommand{\eqa}{\end{eqnarray*}}
\newcommand{\bqan}{\begin{eqnarray}}
\newcommand{\eqan}{\end{eqnarray}}
\newcommand{\bqt}{\begin{quote}}
\newcommand{\eqt}{\end{quote}}
\newcommand{\bt}{\begin{tabbing}}
\newcommand{\et}{\end{tabbing}}
\newcommand{\bit}{\begin{itemize}}
\newcommand{\eit}{\end{itemize}}
\newcommand{\ben}{\begin{enumerate}}
\newcommand{\een}{\end{enumerate}}
\newcommand{\beq}{\begin{equation}}
\newcommand{\eeq}{\end{equation}}
\newcommand{\bdefi}{\begin{definition}}
\newcommand{\edefi}{\end{definition}}
\newcommand{\bpro}{\begin{proposition}}
\newcommand{\epro}{\end{proposition}}
\newcommand{\blem}{\begin{lemma}}
\newcommand{\elem}{\end{lemma}}
\newcommand{\bth}{\begin{theorem}}
\newcommand{\eth}{\end{theorem}}
\newcommand{\bco}{\begin{corollary}}
\newcommand{\eco}{\end{corollary}}
\newcommand{\bdes}{\begin{description}}
\newcommand{\edes}{\end{description}}
\newcommand{\bre}{\begin{remark}}
\newcommand{\ere}{\end{remark}}
\newtheorem{definition}{Definition}[section]
\newtheorem{proposition}[definition]{Proposition}
\newtheorem{lemma}[definition]{Lemma}
\newtheorem{theorem}[definition]{Theorem}
\newtheorem{corollary}[definition]{Corollary}
\newtheorem{remark}[definition]{Remark}
\newenvironment{proof}[1][Proof]{\textbf{#1.} }{\ \rule{0.5em}{0.5em}}
\begin{document}

\begin{titlepage}

\title{Index insurance under demand and solvency constraints}

\author{{\large Olivier L\textsc{opez}$^1$},
{\large Daniel N\textsc{kameni}$^{1,2}$}  }

\date{\today}
\maketitle

\renewcommand{\baselinestretch}{1.1}

\begin{abstract}
Index insurance is often proposed to reduce protection gaps, especially for emerging risks. Unlike traditional insurance, it bases compensation on a measurable index, enabling faster payouts and lower claim management costs. This approach benefits both policyholders, through quick payments, and insurers, through reduced costs and better risk control due to reliable data and robust statistical estimates. An important difference with the concept of Cat Bonds is that the feasibility of such coverage relies on the possibility of mutualization. Mutualization, in turn, is achieved only if a sufficiently high number of policyholders agree to subscribe. The purpose of this paper is to introduce a model for the demand for index insurance and to provide conditions under which the solvency of the portfolio is achieved. From these conditions, we deduce a product that combines index and traditional indemnity insurance in order to benefit from the best of both approaches. We illustrate our results with a practical example involving the design of an index insurance product in the field of cyber insurance.
\end{abstract}

\vspace*{0.5cm}

\noindent{\bf Key words:} Index insurance; expected utility; demand in insurance; cyber insurance

\vspace*{0.5cm}

\noindent{\bf JEL code:} G220

\vspace*{0.5cm}

\noindent{\bf Short title:} Demand for index insurance.

\vspace*{.5cm}

{\small
\parindent 0cm
$^1$ CREST Laboratory, CNRS, Groupe des Écoles Nationales d'Économie et Statistique, Ecole Polytechnique, Institut Polytechnique de Paris, 5 avenue Henry Le Chatelier 91120 PALAISEAU, France\\
$^2$ Detralytics, 1-7 Cours Valmy 92923 Paris La Défense Cedex, France. \\

E-mails: olivier.lopez@ensae.fr, daniel.nkameni@ensae.fr}

\end{titlepage}

\small
\normalsize
\addtocounter{page}{1}

\section{Introduction}

Index insurance is often promoted as a solution capable of addressing certain structural weaknesses of traditional indemnity-based insurance (see for example \cite{barnett2007weather}, \cite{carter2017index}, \cite{prokopchuk2018index} or \cite{han2019weather}). The principle of these coverages lies in the use of an index (or parameter) that can be easily calculated based on information available immediately after the incident. This automated calculation significantly simplifies claim management. Since the compensation is determined without the need for an expert to evaluate the amount, it can be paid out very quickly to the policyholder. The clarity of the indemnification conditions also reduces the likelihood of legal disputes. On the other hand, the policyholder must bear basis risk (see \cite{clement2018global}), as the compensation is not based on the actual loss but instead on an approximation of the latter. Consequently, there is a concern that index insurance may have a disappointing aspect, which is seen as a barrier to its development (see \cite{johnson2021paying}).

The design of an index that can serve as the basis for insurance coverage is similar to a statistical problem of estimation or prediction: using available variables, the goal is to approximate as closely as possible an unobserved quantity (the loss experienced by the policyholder), see \cite{nhess-21-2379-2021}. However, this problem involves a number of constraints due to the need to align with the policyholder's expectations. For instance, the construction of an index proposed by \cite{conradt2015flexible}, \cite{zhang2019index}, or \cite{chen2023managing} is based on maximizing the policyholder's utility rather than relying on a more standard metric commonly used in regression or forecasting.

Indeed, behind the question of meeting the policyholder's needs lies the issue of demand. Index insurance remains an insurance product: while other index-based products, such as Cat Bonds, can achieve balance through diversification strategies inherent to financial instruments, most index insurance products rely on mutualization to withstand adverse outcomes. However, the mutualization mechanism requires a sufficiently large number of policyholders. Insufficient demand will weaken the product, beyond simply failing to recover the cost of designing the index.

The purpose of the present paper is to study, through a modeling of the demand for index insurance, the viability of such a product when it competes with a traditional indemnity-based insurance product. The aim here is less about designing an optimal index and more about examining the conditions under which such an index becomes acceptable in a situation where the insurer must meet a solvency requirement. Particular attention will be paid to the impact of the loading factor applied to the premium (this loading factor can be lower in the case of index insurance due to reduced management costs), policyholders' risk aversion, and compensation delays of the competing indemnity-based insurance as levers for achieving this objective. This analysis will also lead us to propose the construction of a ``hybrid" coverage: by hybrid, we mean a combination of traditional indemnity-based insurance with index insurance, which supplements it in certain cases to leverage the best of both approaches, with index insurance not intervening when the basis risk is too high.

The rest of the paper is organized as follows. In Section \ref{sec:notations}, we set up the notations and formalize the problem of demand in index insurance and its link with solvency. Section \ref{sec:conditions} then shows conditions to meet (under the assumption that the utility of the policyholder is exponential) in order to achieve a sufficient demand in index insurance. These conditions lead to a natural choice for a hybrid coverage, mixing traditional and index insurance. This hybrid coverage is presented in Section \ref{sec:hybrid}. A practical example in cyber insurance is then provided in Section \ref{sec:practical}.

\section{Notation and settings}
\label{sec:notations}

In this section, we formalize the general framework that we consider to study the demand (Section \ref{sec:demand}) and solvency (Section \ref{sec:lln}) of a portfolio of index insurance products—first in the classical case where policyholders are independent, and then by introducing Section \ref{sec:system} to add an accumulation component that may arise when a large number of policyholders experience a simultaneous claim.

\subsection{Index insurance demand and statement of the problem}
\label{sec:demand}

We consider a situation where a policyholder has a choice between two insurance products: 
\begin{itemize} \item The first one is a ``traditional" indemnity-based insurance product, where the loss $Y \geq 0$ of the policyholder in the following year is fully covered. \item The second one is an index insurance product, which is based on $\phi(\mathbf{W})$, where $\mathbf{W}$ is a set of covariates measured after a claim in order to compute the index. \end{itemize}

Typically, $\phi(\mathbf{W})$ will be lower than $Y$, with overcompensation assumed to be rare in index insurance. To simplify, we consider that a policyholder experiences at most one claim; otherwise, $\mathbf{W}$ should be understood as the (different) circumstances of all encountered claims. If there is no claim, $Y = 0$ and $\phi(\mathbf{W}) = 0$.

A traditional way to model the demand in insurance relies on the concept of expected utility, as in \cite{cummins2004demand}, \cite{hao2018insurance}, or \cite{eeckhoudt1992background}. A refinement of this approach, especially in the context where there is a choice between different products, is proposed in \cite{braun2004impact} or \cite{fujii2016regret} and relies on the notion of ``regret". In this paper, we concentrate on the most classical framework, that is, expected utility, also because of recent contributions to the design of index-based covers which rely on expected utility maximization (see, for example, \cite{zhang2019index} or \cite{chen2023managing}).

Consider a class of utility functions $\mathcal{U}=\left\{x\in \mathbb{R} \rightarrow U_{\alpha}(x):\alpha \in \mathcal{A}\right\},$ where $\mathcal{A}\subset \mathbb{R}^k.$ Every function $U_{\alpha}$ is assumed to be non-decreasing and strictly concave to represent risk aversion. Each policyholder is associated with a different function, that is, with a different value of $\alpha.$ The decision to buy index insurance compared to indemnity-based insurance is motivated by the maximization of the expected utility of the final wealth.

\begin{itemize}
\item In the case of index insurance, the policyholder pays a premium $\pi_{\phi},$ incurs a loss $Y$, and receives a compensation $\phi(\mathbf{W}).$ The corresponding expected utility is
$$\frak{U}_{\phi}(\alpha)=E\left[U_{\alpha}(\phi(\mathbf{W})-Y-\pi_{\phi})\right].$$
where $\pi_{\phi}=(1+\theta)E[\phi(W)],$ and $\theta>0$ is the loading factor of the index insurance product.
\item In the case of indemnity-based insurance, the difference lies in the fact that the price is $\pi_{Y},$ but the compensation is $Y.$ However, we want to reflect the fact that this compensation is usually paid with a longer delay than with index insurance, which may be problematic for the policyholder who requires liquidity to repair the damages. Therefore, we consider that the compensation will be discounted by a factor $\exp(-\tau)$ for some $\tau\geq 0.$ This leads to the following expected utility for this solution:
$$\frak{U}_{Y,\tau}(\alpha)=E\left[U_{\alpha}\left(\{\exp(-\tau)-1\}Y-\pi_Y\right)\right].$$
where $\pi_{Y}=(1+\theta_Y)E[Y],$ and $\theta_Y>0$ is the loading factor of the indemnity-based insurance product.
\end{itemize}

Let us note that we did not consider the initial capital of the policyholder. This can be taken into account by incorporating it into the parameter $\alpha$ (which may be multivariate). Moreover, let us assume that we consider a function defined on $\mathbb{R}$ to allow the possibility of a negative value for the policyholder's wealth (in which case, a debt is created).

A policyholder with parameter $\alpha\in \mathbb{R}^k$ will choose to rely on index insurance only if

\begin{equation}\label{cond_achat}\frak{U}_{\phi}(\alpha)-\frak{U}_{Y,\tau}(\alpha)>0.
\end{equation}

Here, we assume to simplify that the customer buys one of the two contracts. Alternatively, one could easily consider the option where a third choice of not buying any insurance protection is possible.

\begin{remark}Note that we consider the case in which the loss is completely covered by the indemnity-based contract. We consider this case for simplicity, but the results can be easily extended to other forms of indemnities. For example, in the case of proportional insurance, only a proportion $p$ of the claim is paid, leading to a compensation $pY$. The value of $p$ can be included in the definition of $\tau$, so that $\exp(-\tau)$ reflects not only the depreciation due to time, but also partial compensation. The loading factor of the premium, as stated above, would also include this proportion $p$. Another important case concerns the inclusion of a deductible, in which case the compensation becomes $(Y-d)_+$ for some $d>0$. In this case, the utility associated with the indemnity-based contract is smaller than the current definition of $\mathfrak{U}_{Y,\tau}(\alpha)$, so condition (\ref{cond_achat}) will tend to slightly underestimate the demand for parametric insurance, therefore providing a ``cautious'' bound if the objective is to decide whether the development of parametric insurance is relevant or not.\end{remark}

\subsection{Effect on mutualization}
\label{sec:lln}

To be viable, an index insurance product needs to be subscribed by a sufficient number of policyholders. Consider a target population of potential customers of large size $N$, the number of policyholders buying the index insurance contract will be approximately

\begin{equation}
n = N\int \mathbf{1}_{\frak{U}_{\phi}(\alpha)-\frak{U}_{Y,\tau}(\alpha)>0}d\mu(\alpha), \label{nombre}
\end{equation} 
where $\mu$ is the cumulative distribution of $\alpha$ among the population.

The question is then to know if this number $n$ is large enough to build a portfolio that is economically viable. Considering that $(Y_i)_{1\leq i \leq n}$ are the losses of the $n$ policyholders, the loss of the insurance company is

$$L_n(\pi_{\phi})=\sum_{i=1}^n \phi(\mathbf{W}_i)-n\pi_{\phi}.$$

The size of the portfolio should be large enough to ensure that ruin during the next year is avoided with a sufficiently high probability, that is we want

\begin{equation}
\mathbb{P}\left(L_n(\pi_{\phi})\geq 0\right)\leq \varepsilon,  \label{solvency}
\end{equation}
where $\varepsilon>0$ is close to zero.

If we assume that the policyholders are independent and identically distributed, the Central Limit Theorem applies, and
$$n^{1/2}\left\{\frac{1}{n}\sum_{i=1}^n \phi(\mathbf{W}_i)-\pi^*_{\phi}\right\}\underset{n \to \infty}{\xrightarrow{}} \mathcal{N}\left(0,\sigma^2_{\phi}\right),$$
where $\sigma^2_{\phi}= Var(\phi(\mathbf{W}))$ and $\pi^*_{\phi}=E[ \phi(\mathbf{W})].$ From this distributional convergence, one can deduce the approximation

$$\mathbb{P}\left(L_n(\pi_{\phi})\geq 0\right)\approx S\left(\frac{n^{1/2}\theta \pi^*_{\phi}}{\sigma_{\phi}}\right),$$
where $S$ is the survival function of a $\mathcal{N}(0,1)$ variable, and $\pi_{\phi}=(1+\theta)\pi^*_{\phi},$ with $\theta>0$ being the loading factor of the index insurance product. Hence (\ref{nombre}) approximately rewrites

\begin{equation}
\label{nombre2}
\frac{n^{1/2}\theta \pi^*_{\phi}}{\sigma_{\phi}}\geq S^{-1}(\varepsilon).
\end{equation}

Of course, increasing $\theta$ does not necessarily lead to an improvement in solvency, since $n$ decreases with $\theta$ as demand is reduced.

We will keep this Gaussian approximation in the next parts, but let us note that this requires the variance of $\phi(\mathbf{W})$ to be finite, which may not be the case for extremely heavy-tailed distributions. If $\phi(\mathbf{W})$ is extremely heavy-tailed, and if $\varepsilon$ is small compared to $\pi_\phi$, other kinds of approximations based on Generalized Pareto distributions may be used, see for example \cite{mikosch} for more details. However, extremely heavy-tailed variables are not our main focus in the present paper, since they are incompatible with the exponential utility approach developed in section \ref{sec:notations} (which requires the loss to have a finite Laplace transform).

The independence and identical distribution assumption is justified if the field of application of the index insurance product is restricted to cases where policyholders are not individuals but large geographical areas (such as villages, cities, departments, regions, etc.) or large entities such as companies in the case of cyber insurance, as is the application in this paper. However, despite this restrictive context of application, there remains a slight probability of accumulation, at least on a fraction of the portfolio. For example, in the case of crop insurance (which is probably one of the most famous use cases of index products, see for example \cite{nhess-21-2379-2021} or \cite{barnett2007weather}), weather events may strike a group of geographical areas simultaneously. To account for that, we propose, in the following section, a simplified way to proceed via the introduction of a shock on the i.i.d. model that materializes the presence of catastrophic events.

\subsection{A simplified way to include accumulation phenomena}
\label{sec:system}

An accumulation phenomenon occurs when a significant number of policyholders encounter claims in a short period of time. In climate-related risk, this is essentially linked to the proximity between policyholders: policyholders that live in the same area are affected by similar weather conditions. In other cases, the dependence may not solely rely on geographic proximity, such as in cyber insurance. In such cases, it may be hopeless to obtain a clear map of the links between the policyholders. Moreover, trying to model the dependence between policyholders would introduce an additional difficulty in our context, where we take demand into account: even in the case of geographic dependence, this would require modeling a link between the distribution $\mu$ (describing the behaviors of the policyholders) and the localization of the potential customers. Calibrating the model would then require an important amount of data that may be difficult to obtain.

Consequently, we consider a simplified case where the accumulation episode materializes via an additional loss, which is heavy-tailed. The total loss of the portfolio is then

$$L_n(\pi_{\phi})=A_n+\sum_{i=1}^n \varrho_i\phi(\mathbf{W}_i)-n\pi_{\phi},$$
where $\varrho_i=0$ if policyholder $i$ was part of an accumulation episode, and $A_n$ represents the aggregated amount related to accumulation episodes.

For $A_n$, we consider a Generalized Pareto distribution, that is

$$\mathbb{P}(A_n\geq t)=\frac{1}{\left(1+\frac{\gamma t}{n s}\right)^{1/\gamma}},$$
with $\gamma<1$ and $s>0.$ Here we consider a Generalized Pareto where the scale parameter is proportional to $n$: this is the idea that the cost of the accumulation episode is proportional to the size of the portfolio.

In this case, the probability of ruin is bounded by

$$\mathbb{P}\left(L_n(\pi_{\phi})\geq 0\right)\leq \mathbb{P}\left(A_n-\frac{n\theta}{a}\geq 0 \right)+\mathbb{P}\left(\sum_{i=1}^n \phi(\mathbf{W}_i)-\left(\frac{1}{\theta}+\left\{1-\frac{1}{a}\right\}\right)n\theta\pi^*_{\phi}\geq 0\right),$$
for all $a>1.$ Using the same Gaussian approximation as in section \ref{sec:lln}, the right-hand side is approximately

\begin{eqnarray*}
 \frac{1}{\left(1+\frac{\gamma \theta}{a s}\right)^{1/\gamma}}+S\left((a-1)\frac{n^{1/2}\theta\pi^*_{\phi}}{a\sigma_{\phi}}\right).
\end{eqnarray*}
To make this quantity less than the tolerance $\varepsilon,$ we need

\begin{equation}
\label{cond2}
\frac{ n^{1/2}\theta\pi^*_{\phi}}{\sigma_{\phi}}\geq \frac{a}{(a-1)}S^{-1}\left(\varepsilon - \frac{1}{\left(1+\frac{\gamma \theta}{a s}\right)^{1/\gamma}}\right),
\end{equation}
if 

$$1<a<\frac{\gamma \theta \varepsilon^{\gamma}}{s(1-\varepsilon^{\gamma})},$$
which imposes that $\theta$ should be large enough to absorb the accumulation episode.

Logically, dealing with an additional accumulation risk increases the number of required policyholders to achieve mutualization. Moreover, a constraint appears on the loading factor $\theta$, which should be large enough. Again, since the achievable $n$ tends to decrease when the loading factor increases due to a lower demand, this number may become impossible to reach in some cases.

In the following section, we discuss conditions on the demand for (\ref{nombre2}) and (\ref{cond2}) to hold in the special case where the utility function is exponential: a simple form of the utility which allows obtaining light constraints on the measure $\mu.$

\section{Sufficient conditions for the viability of an index insurance product under exponential utility}
\label{sec:conditions}

In this section, we consider the particular case where the utility function is exponential. This allows us to simplify considerably the formulation of the problem and to provide, in section \ref{sec:exputil}, sufficient conditions for an index insurance product to be preferable compared to a traditional one. Then, we consider in section \ref{sec:expsolvency} the consequences on the solvency of the portfolio. Section \ref{sec:hybrid} introduces a way to combine traditional and index insurance to optimize the attractiveness of the product.

\subsection{Exponential utility}
\label{sec:exputil}

In this section, we consider $U_{\alpha}(x)=-(1/\alpha)\exp(-\alpha x)$. The parameter $\alpha$ can be interpreted as a materialization of risk aversion, in the sense that a policyholder with a high value of $\alpha$ will tend to accept a higher premium in exchange for insurance protection against the risk. In this case, the condition (\ref{cond_achat}) can be simplified. We introduce the Laplace transform and conditional Laplace transforms of $Y$,

\begin{eqnarray*}
\Psi_Y(\alpha) &=& E\left[\exp(\alpha Y)\right], \\
\psi_Y(\alpha|\mathbf{w}) &=& E\left[\exp(\alpha Y)|\mathbf{W}=\mathbf{w}\right].
\end{eqnarray*}

We assume that $\Psi_Y(\alpha)<\infty$ for all $\alpha$ in the support of $\mu$. Then,

\begin{eqnarray}
\frak{U}_{\phi}(\alpha)-\frak{U}_{Y,\tau}(\alpha)>0 \; & \Longleftrightarrow & \; -\alpha \left\{E\left[\psi_Y(\alpha|\mathbf{W})\exp\left(-\alpha \{\phi(\mathbf{W})-\pi_{\phi}\}\right)\right]-\Psi_Y(\alpha')\exp(\alpha \pi_Y)\right\}>0, \nonumber \\
& \Longleftrightarrow & \; E\left[\psi_Y(\alpha|\mathbf{W})\exp\left(-\alpha \phi(\mathbf{W})\right)\right]<\Psi_Y(\alpha')\exp\left(\alpha (\pi_Y-\pi_{\phi})\right), \label{cond_exp}
\end{eqnarray}
where $\alpha'=(1-\exp(-\tau))\alpha$.

From this expression, we see that condition (\ref{cond_achat}) is essentially a matter of bounding the difference $m_{Y}(\alpha|\mathbf{W})-\phi(\mathbf{W})$, where $m_Y(\alpha|\mathbf{w})=\log \psi_Y(\alpha|\mathbf{w})/\alpha$. The difference (which is positive from Jensen's inequality) should be small enough compared to the difference in prices $\pi_Y-\pi_{\phi}$, and the presence of $\Psi_Y(\alpha')$ allows this difference to go higher when $\tau$ increases.

Therefore, a first idea could be to take $\phi(\mathbf{w})=m_{Y}(\alpha|\mathbf{w})$ as an indemnity function. But this solution would not be efficient, in the sense that this pay-off would lead to too high a price: from Jensen's inequality, the pure premium would then be

$$E\left[m_{Y}(\alpha|\mathbf{w})\right]>E[Y].$$

Hence, except if we are very close to equality (which would happen only if $\alpha$ is close to zero and/or $Y|\mathbf{W}$ has a variance close to zero), it would become very difficult to offer a premium $\pi_{\phi}$ smaller than $\pi_Y$, and in some cases even impossible with a loading factor $\theta_{\phi}>0$. In addition to the problem of high prices, this type of contract would typically lead to a too important compensation in many cases. This may collide with some legal constraints depending on insurance regulations\footnote{For example, according to French legislation, this could be interpreted as ``enrichment without cause", although the existence of a claim may be a protection against this argument. See for example S. Bros, L’assurance paramétrique en assurance de dommages, bjda.fr 2023, Dossier n° 6.}.

To be compatible with this operational necessity to keep a low price, we consider a pay-off $\phi(\mathbf{w})=\phi_{\beta}(\mathbf{w})=\beta E[Y|\mathbf{W}=\mathbf{w}]$, with $\beta\leq 1$. This choice also allows us to control the probability of over-compensating for a claim. From Chernoff's inequality, this probability is

\begin{equation}\label{chernoff}
\mathbb{P}\left(Y-\beta E[Y|\mathbf{W}]<0|\mathbf{W}=\mathbf{w}\right)\leq \psi_Y(\rho|\mathbf{w})\exp\left(-\rho\beta E[Y|\mathbf{W}=\mathbf{w}]\right),
\end{equation}
for all $\rho>0$ such that $\psi_Y(\rho|\mathbf{w})<\infty$. This inequality can be used to control the proportion of cases where the compensation is too high.

Another approach to define the pay-off is to perform utility maximization as in \cite{zhang2019index} and \cite{chen2023managing}. In such an approach, the idea is to maximize a trade-off between large amounts of compensation $\phi(\mathbf{W})$ and an affordable price. However, let us note that, with this approach, $E[\phi(\mathbf{W})]$ may be greater than $E[Y]$ in some situations. Therefore, we prefer in this setting to consider a particular shape of pay-off, determining which difference of price between the indemnity and the index product is acceptable by the customer. If we determine a situation where $\phi(\mathbf{W})=\beta E[Y|\mathbf{W}]$, associated with a loading factor $\theta$, is an acceptable index product, an additional optimization can be performed if the constraint on $E[\phi(\mathbf{W})]\leq E[Y]$ is not an issue. Finally, we believe that the choice of the value of $\beta$ is best left in the hands of the insurer, as long as it is clearly stated at the signing of the contract and the policyholder is aware of the structure of the product being purchased. Indeed, one could imagine a product in which an insurer offers various proportions of coverage of the expected average loss, and policyholders choose the proportions corresponding to the premiums they are willing and able to pay. Such an approach, along with its advantages and limitations, is discussed to some extent in \cite{liang2023optimal}, \cite{marone2022should}, and \cite{woodard2016crop}.

Proposition \ref{prop_existence} shows that the index insurance product is chosen by a policyholder with risk aversion $\alpha$ provided that the loading factor $\theta$ is small enough and that constraint (\ref{cond_bound}) on the conditional Laplace transform at point $\alpha$ holds.

\begin{proposition}
\label{prop_existence}

Assume that

\begin{equation}
\label{cond_bound}
\sup_{\mathbf{w}\in \mathcal{W}}\frac{m_Y(\alpha|\mathbf{w})-\phi_{\beta}(\mathbf{w})}{E[Y]}< 1-\beta+\theta_Y.
\end{equation}
Let 

$$\eta(\alpha,\beta)=1-\beta+\theta_Y-\left\{\sup_{\mathbf{w}\in \mathcal{W}}\frac{m_Y(\alpha|\mathbf{w})-\phi_{\beta}(\mathbf{w})}{E[Y]}\right\}.$$
Then, for $\tau\geq 0,$ Condition (\ref{cond_achat}) holds if

$$\theta\leq \frac{\eta(\alpha, \beta)}{\beta}.$$
Hence, in this case, there exists an index insurance product with a positive loading factor that is preferable for a policyholder with risk aversion $\alpha.$
\end{proposition}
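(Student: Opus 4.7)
The plan is to reduce Proposition \ref{prop_existence} to a direct consequence of the equivalent reformulation (\ref{cond_exp}) by finding an exponential-form upper bound for the left-hand side and an exponential-form lower bound for the right-hand side, and then to show that the assumption $\theta\leq \eta/\beta$ makes the exponent on the right dominate the one on the left.

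First I would set $\phi=\phi_{\beta}$ and rewrite $\psi_Y(\alpha\mid\mathbf{w})=\exp(\alpha\, m_Y(\alpha\mid\mathbf{w}))$ in (\ref{cond_exp}). The left-hand side then equals
\[
E\!\left[\exp\bigl(\alpha\{m_Y(\alpha\mid\mathbf{W})-\phi_{\beta}(\mathbf{W})\}\bigr)\right].
\]
Denoting $K=\sup_{\mathbf{w}\in \mathcal{W}}\{m_Y(\alpha\mid\mathbf{w})-\phi_{\beta}(\mathbf{w})\}/E[Y]$, which by assumption (\ref{cond_bound}) satisfies $K<1-\beta+\theta_Y$, the pointwise inequality $m_Y(\alpha\mid\mathbf{W})-\phi_{\beta}(\mathbf{W})\leq K\,E[Y]$ yields
\[
E\!\left[\psi_Y(\alpha\mid\mathbf{W})\exp(-\alpha\phi_{\beta}(\mathbf{W}))\right]\leq \exp(\alpha K E[Y]).
\]

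Next I would lower-bound the right-hand side of (\ref{cond_exp}). Since $Y\geq 0$ and $\alpha'=(1-\exp(-\tau))\alpha\geq 0$, we have $\Psi_Y(\alpha')\geq 1$, so with $\pi_{\phi}=(1+\theta)\beta E[Y]$ and $\pi_Y=(1+\theta_Y)E[Y]$,
\[
\Psi_Y(\alpha')\exp\bigl(\alpha(\pi_Y-\pi_{\phi})\bigr)\geq \exp\bigl(\alpha E[Y]\{1+\theta_Y-\beta(1+\theta)\}\bigr).
\]
The inequality required by (\ref{cond_exp}) then follows as soon as $K<1+\theta_Y-\beta-\beta\theta$, i.e.\ $\beta\theta<1-\beta+\theta_Y-K=\eta$. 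Under the strict hypothesis $\theta<\eta/\beta$ this is immediate; at the endpoint $\theta=\eta/\beta$ the two exponents coincide, so strict inequality in (\ref{cond_achat}) is recovered from the fact that (\ref{cond_bound}) is strict, which forces either $\Psi_Y(\alpha')>1$ (when $\tau>0$ and $Y$ is non-degenerate) or $P(m_Y(\alpha\mid\mathbf{W})-\phi_{\beta}(\mathbf{W})<K\,E[Y])>0$, so that at least one of the two bounding inequalities is strict.

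The final step is bookkeeping: read off from the equivalence (\ref{cond_exp}) that the chain of inequalities above is exactly condition (\ref{cond_achat}), and note that since $\eta>0$ we may take any $\theta\in(0,\eta/\beta]$, producing an index product with a strictly positive loading factor that is preferred by the policyholder with risk aversion $\alpha$. There is no genuine obstacle here — the derivation is essentially a bounding argument combined with an algebraic rearrangement; the only mildly delicate point is the strictness of (\ref{cond_achat}) at the endpoint $\theta=\eta/\beta$, handled by the observation above.
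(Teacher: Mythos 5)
Your proof is correct and follows essentially the same route as the paper's: both reduce the claim to the exponential reformulation (\ref{cond_exp}), bound its left-hand side by $\exp(\alpha K E[Y])$ via the pointwise supremum of $m_Y(\alpha|\mathbf{w})-\phi_{\beta}(\mathbf{w})$, use $\Psi_Y(\alpha')\geq 1$ to cover all $\tau\geq 0$ at once (the paper phrases this as ``it suffices to treat $\tau=0$''), and finish with the same algebraic rearrangement giving $\beta\theta\leq\eta$. The only addition is your explicit endpoint discussion at $\theta=\eta/\beta$, which the paper silently glosses over in the same way; just note that strictness of (\ref{cond_bound}) alone does not in fact exclude the degenerate case $\tau=0$ with $m_Y(\alpha|\mathbf{W})-\phi_{\beta}(\mathbf{W})$ almost surely equal to its supremum, where both bounds become equalities.
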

The proof is given in section \ref{sec_append_proof1}.

In this result, condition \ref{cond_bound} is key and needs to be examined more closely. By Jensen's inequality, $m_Y(\alpha|\mathbf{w})\geq E[Y|\mathbf{W}=\mathbf{w}],$ so 

$$m_Y(\alpha|\mathbf{w})-\phi_{\beta}(\mathbf{w})\geq (1-\beta)E\left[Y|\mathbf{W}=\mathbf{w}\right].$$ 
Taking the expectation, we see that 

\begin{equation}
\label{desc}\frac{E\left[m_Y(\alpha|\mathbf{w})-\phi_{\beta}(\mathbf{w})\right]}{E[Y]}\geq 1-\beta.
\end{equation}

Hence, a necessary condition for condition \ref{cond_bound} is that the difference between the left-hand side and right-hand side of (\ref{desc}) is not too high (less than $\theta_Y$). This difference tends to become lower when $\alpha$ tends to zero, confirming the intuition that risk aversion plays against the index insurance product. Moreover, a smaller value of $Var(Y|\mathbf{W}=\mathbf{w})$ will also reduce the gap between $m_Y(\alpha|\mathbf{w})-\phi_{\beta}(\mathbf{w})$ and its lower bound $(1-\beta)E[Y|\mathbf{W}=\mathbf{w}].$ This directly refers to the ability to efficiently predict $Y$ from the available information $\mathbf{W},$ from which is computed the index.

\subsection{Consequences on the solvency of the portfolio}
\label{sec:expsolvency}

Let us note that proposition \ref{prop_existence} only provides a sufficient condition for Condition (\ref{cond_achat}) to hold. It is valid for all values of $\tau,$ including $\tau=0,$ and is then sufficient to obtain a lower bound for the demand, since higher values of $\tau$ will increase the disadvantage of the indemnity-based insurance product. In case of $\tau>0,$ there is room for weakening (\ref{cond_bound}) and/or the condition on $\theta.$

From proposition \ref{prop_existence}, we easily get the following corollary.

\begin{corollary}
\label{cor_extension}
Assume that (\ref{cond_bound}) holds for some $\alpha_0>0.$ Then, if $\theta\leq \eta(\alpha, \beta)\beta^{-1},$ Condition (\ref{cond_achat}) holds for all values of $\alpha \in (0,\alpha_0+h_{\beta}(\tau)]$ with
$$h_{\beta}(\tau)=F^{-1}\left(F(\alpha_0)\exp(-\tau)\exp(-\alpha[\pi_Y-\pi_{\phi_{\beta}}])\right)-\alpha_0,$$
where $F(\alpha)=\Psi'(\alpha)=E[Y\exp(\alpha Y)],$ and

\begin{equation}
\label{lower_bound}
n \geq N \mu\left((0,\alpha_0+h_{\beta}(\tau)]\right).
\end{equation}
\end{corollary}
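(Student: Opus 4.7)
The plan is to leverage Proposition \ref{prop_existence}, which establishes Condition (\ref{cond_achat}) only at the single risk aversion $\alpha_0$, and to extend it to the entire interval $(0, \alpha_0 + h_\beta(\tau)]$. The extension proceeds in two phases: a downward extension to $(0, \alpha_0]$ based on a simple monotonicity observation, and an upward extension to $(\alpha_0, \alpha_0 + h_\beta(\tau)]$ exploiting the slack created by the delay factor $\tau > 0$ on the right-hand side of (\ref{cond_exp}). Once Condition (\ref{cond_achat}) is verified on this whole interval, the bound (\ref{lower_bound}) follows immediately by substituting the characteristic function of the interval into (\ref{nombre}).

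For the downward extension, I would first show that $\alpha \mapsto m_Y(\alpha|\mathbf{w}) = \log\psi_Y(\alpha|\mathbf{w})/\alpha$ is non-decreasing in $\alpha > 0$. This is the standard chord-slope fact: the cumulant generating function $\alpha \mapsto \log\psi_Y(\alpha|\mathbf{w})$ is convex and vanishes at the origin, so its ratio by $\alpha$ is non-decreasing. Since $\phi_\beta(\mathbf{w})$ and $E[Y]$ do not depend on $\alpha$, the supremum defining the left-hand side of (\ref{cond_bound}) is non-decreasing in $\alpha$. Therefore, if (\ref{cond_bound}) holds at $\alpha_0$ it automatically holds at every $\alpha \in (0, \alpha_0]$ with the same $\eta$, and Proposition \ref{prop_existence}, applied pointwise at each such $\alpha$ with the fixed $\theta \leq \eta/\beta$, yields Condition (\ref{cond_achat}) throughout $(0, \alpha_0]$ for every $\tau \geq 0$.

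For the upward extension, I would work directly with the equivalent form (\ref{cond_exp}), the right-hand side of which now carries the multiplicative factor $\Psi_Y(\alpha')$ with $\alpha' = (1-e^{-\tau})\alpha$; this factor exceeds one as soon as $\tau > 0$ and creates the room we need to move beyond $\alpha_0$. Starting from the tight inequality in (\ref{cond_exp}) at $\alpha_0$ provided by Proposition \ref{prop_existence}, the idea is to track the $\alpha$-derivative of both sides, use the identity $\Psi_Y'(\alpha) = F(\alpha) = E[Y e^{\alpha Y}]$ to write the growth of $\Psi_Y(\alpha')$ as an integral of $F$, and identify the maximal increment $h$ such that (\ref{cond_exp}) is still preserved at $\alpha_0 + h$. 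A first-order balance between the two sides then produces an equation of the form $F(\alpha_0 + h) = F(\alpha_0)\exp(-\tau)\exp(-\alpha[\pi_Y - \pi_{\phi_\beta}])$, which can be inverted because $F$ is strictly increasing on the support of $\mu$, giving precisely $h_\beta(\tau)$ as stated. Combining both extensions, Condition (\ref{cond_achat}) holds for every $\alpha \in (0, \alpha_0 + h_\beta(\tau)]$, and (\ref{nombre}) immediately gives $n \geq N\mu((0, \alpha_0 + h_\beta(\tau)])$.

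The main obstacle is the upward extension. Condition (\ref{cond_bound}) cannot be expected to hold at $\alpha > \alpha_0$, so the left-hand side of (\ref{cond_exp}) is no longer controlled by the clean exponential bound used inside Proposition \ref{prop_existence}, and its $\alpha$-dependence through $\psi_Y(\alpha|\mathbf{W})$ is genuinely nontrivial. Turning the $\Psi_Y(\alpha') - 1$ slack into a concrete, explicit increment $h_\beta(\tau)$ therefore requires a careful implicit-function / first-derivative argument around $\alpha_0$ rather than a direct algebraic manipulation, and it is precisely this linearization that explains why $F = \Psi_Y'$, rather than $\Psi_Y$ itself, is the object that appears in the formula for $h_\beta(\tau)$.
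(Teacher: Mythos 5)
Your proposal is correct and follows essentially the same route as the paper: the paper derives the corollary from Lemma \ref{lemma_heredity}, whose first case is exactly your chord-slope/Jensen monotonicity argument (applied there to the normalized log-moment generating function of $Z=Y-\phi_{\beta}(\mathbf{W})-\pi_Y+\pi_{\phi_{\beta}}$ directly rather than to $m_Y(\alpha|\mathbf{w})$ followed by a re-application of Proposition \ref{prop_existence} --- a cosmetic difference) and whose second case is precisely your first-order Taylor balance of the two sides of (\ref{cond_exp}), yielding the condition $F(\alpha)/F(\alpha_0)\leq \exp(-\tau)\exp(-\alpha[\pi_Y-\pi_{\phi_{\beta}}])$ and hence $h_{\beta}(\tau)$ by inversion of $F$. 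The bound (\ref{lower_bound}) then follows from (\ref{nombre}) exactly as you state.
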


This result is a direct consequence of the more general result of Lemma \ref{lemma_heredity}.

\begin{proposition}
\label{prop_solv1}
Under the framework of section \ref{sec:lln} (i.i.d. policyholders and no accumulation phenomenon) and the conditions of Corollary \ref{cor_extension}, condition (\ref{solvency}) holds for $\phi_{\beta}$ provided that $\theta\leq \eta(\alpha, \beta)\beta^{-1}$ and

\begin{equation}
\label{magic}\eta(\alpha, \beta) \geq \frac{\sigma \beta 
 S^{-1}(\varepsilon)}{N^{1/2} E[Y] \mu\left((0,\alpha_0+h_{\beta}(\tau)]\right)^{1/2}},
\end{equation}

where

$$\sigma^2=Var\left(E[Y|\mathbf{W}]\right),$$
which can also be rewritten as

$$\sup_{\mathbf{w}\in \mathcal{W}}\frac{m_Y(\alpha|\mathbf{w})-\phi_{\beta}(\mathbf{w})}{E[Y]}\leq 1-\beta+\theta_Y-\frac{\sigma \beta  S^{-1}(\varepsilon)}{N^{1/2} E[Y] \mu\left((0,\alpha_0+h_{\beta}(\tau)]\right)^{1/2}}.$$
\end{proposition}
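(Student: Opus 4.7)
The plan is to combine the Gaussian solvency approximation of Section \ref{sec:lln} with the demand lower bound from Corollary \ref{cor_extension}, choosing the loading factor $\theta$ at the largest value allowed by Proposition \ref{prop_existence}. First I would compute the pure premium and variance of the index pay-off under $\phi_{\beta}(\mathbf{w})=\beta E[Y|\mathbf{W}=\mathbf{w}]$: by the tower property, $\pi_{\phi}^{*}=E[\phi_{\beta}(\mathbf{W})]=\beta E[Y]$, and $\sigma_{\phi}^{2}=\Var(\beta E[Y|\mathbf{W}])=\beta^{2}\sigma^{2}$ with $\sigma^{2}$ as defined in the statement. Substituting these into the Gaussian condition (\ref{nombre2}) reduces (\ref{solvency}) to
$$n^{1/2}\;\geq\;\frac{\sigma\,S^{-1}(\varepsilon)}{\theta\,E[Y]}.$$

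Next I would plug in the demand lower bound. Under the conditions of Corollary \ref{cor_extension} with $\theta\le\eta/\beta$, (\ref{lower_bound}) yields $n\ge N\mu\bigl((0,\alpha_{0}+h_{\beta}(\tau)]\bigr)$, so the solvency inequality is guaranteed as soon as
$$N^{1/2}\mu\bigl((0,\alpha_{0}+h_{\beta}(\tau)]\bigr)^{1/2}\;\geq\;\frac{\sigma\,S^{-1}(\varepsilon)}{\theta\,E[Y]}.$$
Since $n$ is monotone in the admissible set of risk-averse policyholders, and $\theta$ enters the last display only through $\theta^{-1}$, the constraint is easiest to satisfy when $\theta$ is as large as Proposition \ref{prop_existence} allows, namely $\theta=\eta/\beta$. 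Substituting this choice gives exactly (\ref{magic}).

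Finally, the equivalent reformulation in terms of $\sup_{\mathbf{w}\in\mathcal{W}}[m_{Y}(\alpha|\mathbf{w})-\phi_{\beta}(\mathbf{w})]/E[Y]$ is just algebra: by the definition of $\eta$ in Proposition \ref{prop_existence},
$$\sup_{\mathbf{w}\in\mathcal{W}}\frac{m_{Y}(\alpha|\mathbf{w})-\phi_{\beta}(\mathbf{w})}{E[Y]}\;=\;1-\beta+\theta_{Y}-\eta,$$
so (\ref{magic}) translates directly into the stated upper bound on the supremum.

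The only subtlety I anticipate is the internal consistency of picking $\theta=\eta/\beta$: one must check that this value is indeed compatible with the hypotheses of Corollary \ref{cor_extension} (which assume $\theta\le\eta\beta^{-1}$, so this is automatic) and that $n$ is simultaneously large enough for the Gaussian approximation used in Section \ref{sec:lln} to be meaningful. Since the statement is formulated at the level of the Gaussian approximation rather than an exact bound, no further refinement is needed; the rest is direct substitution of $\pi_{\phi}^{*}$, $\sigma_{\phi}$, the demand bound, and the maximal $\theta$ into (\ref{nombre2}). The main obstacle is therefore conceptual, not computational: keeping track that increasing $\theta$ helps solvency but shrinks the demand region $(0,\alpha_{0}+h_{\beta}(\tau)]$, so that the choice $\theta=\eta/\beta$ is truly the optimum within the range permitted by Proposition \ref{prop_existence}.
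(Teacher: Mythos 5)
Your proposal is correct and follows essentially the same route as the paper's proof: compute $\pi^*_{\phi_\beta}=\beta E[Y]$ and $\sigma^2_{\phi_\beta}=\beta^2\sigma^2$, insert the demand lower bound $n\geq N\mu\left((0,\alpha_0+h_{\beta}(\tau)]\right)$ from Corollary \ref{cor_extension} into the Gaussian solvency condition (\ref{nombre2}), and observe that the resulting lower bound on $\theta$ is compatible with the upper bound $\theta\leq \eta\beta^{-1}$ exactly when (\ref{magic}) holds. Your explicit choice $\theta=\eta/\beta$ and the final rewriting via $\eta=1-\beta+\theta_Y-\sup_{\mathbf{w}}[m_Y(\alpha|\mathbf{w})-\phi_{\beta}(\mathbf{w})]/E[Y]$ match the paper's argument.
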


In the case where the possibility of accumulation episodes is included as in section \ref{sec:system}, the result of Proposition \ref{prop_solv2} is slightly modified.

\begin{proposition}
\label{prop_solv2}
Under the framework of section \ref{sec:system} (probability of an accumulation phenomenon described by a Generalized Pareto distribution) and the conditions of Corollary \ref{cor_extension}, condition (\ref{solvency}) holds for $\phi_{\beta}$ provided that $\theta\leq \eta(\alpha, \beta)\beta^{-1}$ and that, for some $\varepsilon'<\varepsilon$ and $a>1,$

\begin{equation}
\label{magic2}\eta(\alpha, \beta) \geq \max\left(\frac{\sigma \beta a S^{-1}(\varepsilon-\varepsilon')}{(a-1)N^{1/2} E[Y] \mu\left((0,\alpha_0+h_{\beta}(\tau)]\right)^{1/2}},\frac{a\beta s (1-\varepsilon'^{\gamma})}{\gamma\varepsilon'^{\gamma}}\right)
\end{equation}

which can also be rewritten as
$$\sup_{\mathbf{w}\in \mathcal{W}}\frac{m_Y(\alpha|\mathbf{w})-\phi_{\beta}(\mathbf{w})}{E[Y]}\leq 1-\beta+\theta_Y-\max\left(\frac{\sigma \beta a S^{-1}(\varepsilon-\varepsilon')}{(a-1)N^{1/2} E[Y] \mu\left((0,\alpha_0+h_{\beta}(\tau)]\right)^{1/2}},\frac{a\beta s (1-\varepsilon'^{\gamma})}{\gamma\varepsilon'^{\gamma}}\right).$$
\end{proposition}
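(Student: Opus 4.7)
The plan is to combine Corollary \ref{cor_extension} with the ruin bound already derived in Section \ref{sec:system}, using the auxiliary parameters $a$ and $\varepsilon'$ to partition the $\varepsilon$-budget between the heavy-tailed accumulation term and the Gaussian pooling term. The algebra itself is routine; essentially everything reduces to inverting the two terms of the ruin bound separately and then eliminating $\theta$ via the demand constraint.

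First I would compute the pay-off moments. Since $\phi_{\beta}(\mathbf{W})=\beta E[Y|\mathbf{W}]$, the tower property gives $\pi^*_{\phi}=\beta E[Y]$ and $\sigma_{\phi}^2=\beta^2\Var(E[Y|\mathbf{W}])=\beta^2\sigma^2$. Starting from the approximation established in Section \ref{sec:system},
\begin{equation*}
\mathbb{P}(L_n(\pi_{\phi})\geq 0)\leq \frac{1}{\left(1+\frac{\gamma\theta}{as}\right)^{1/\gamma}}+S\left((a-1)\frac{n^{1/2}\theta\pi^*_{\phi}}{a\sigma_{\phi}}\right),
\end{equation*}
I would require the first term to be at most $\varepsilon'$ and the second at most $\varepsilon-\varepsilon'$. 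Inverting each inequality produces
\begin{equation*}
\theta\geq \frac{as(1-\varepsilon'^{\gamma})}{\gamma\varepsilon'^{\gamma}}\qquad\text{and}\qquad \frac{n^{1/2}\theta\pi^*_{\phi}}{\sigma_{\phi}}\geq \frac{a\,S^{-1}(\varepsilon-\varepsilon')}{a-1}.
\end{equation*}

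Next I would inject the demand side. Proposition \ref{prop_existence} imposes $\theta\leq \eta/\beta$; since both inequalities above are easier to satisfy for larger $\theta$, it is natural to saturate with $\theta=\eta/\beta$. Substituting this choice together with $\pi^*_{\phi}=\beta E[Y]$ and $\sigma_{\phi}=\beta\sigma$ turns the two inequalities into conditions on $\eta$,
\begin{equation*}
\eta\geq \frac{a\beta s(1-\varepsilon'^{\gamma})}{\gamma\varepsilon'^{\gamma}}\qquad\text{and}\qquad \eta\geq \frac{a\beta\sigma\,S^{-1}(\varepsilon-\varepsilon')}{(a-1)\,n^{1/2}E[Y]}.
\end{equation*}
Replacing $n$ by the lower bound $N\mu((0,\alpha_0+h_{\beta}(\tau)])$ supplied by Corollary \ref{cor_extension} weakens the second inequality into the form announced in the proposition, and collecting the two under a single $\max$ yields (\ref{magic2}); the equivalent rewriting follows immediately from the definition of $\eta$ in Proposition \ref{prop_existence}.

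The main subtlety is the tension between the two constraints: the accumulation term penalizes small $\theta$, while demand penalizes large $\theta$. The parameters $a>1$ and $\varepsilon'<\varepsilon$ are the tuning knobs that let the two coexist, and the feasibility interval for $a$ identified at the end of Section \ref{sec:system} guarantees that, for a suitable choice of $(a,\varepsilon')$, the $\max$ in (\ref{magic2}) is finite and remains compatible with the underlying bound (\ref{cond_bound}).
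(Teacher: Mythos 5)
Your proof is correct and follows essentially the same route as the paper's: split the tolerance $\varepsilon$ into $\varepsilon'$ for the Generalized Pareto accumulation term and $\varepsilon-\varepsilon'$ for the Gaussian pooling term, invert each bound separately, and reconcile the resulting lower bounds on $\theta$ with the demand-side upper bound $\theta\leq\eta\beta^{-1}$ supplied by Corollary \ref{cor_extension}. Your write-up is in fact more explicit than the paper's (which simply defers to the proof of Proposition \ref{prop_solv1} with (\ref{nombre2}) replaced by (\ref{cond2})), and it correctly carries the scale parameter $s$ through the accumulation constraint where the paper's appendix appears to contain a typo writing $\sigma$ in its place.
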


The proofs of these two results are given in section \ref{proof_solv}.

Again, according to Propositions \ref{prop_solv1} and \ref{prop_solv2}, solvency can be achieved as long as $(m_{Y}(\alpha|\mathbf{w})-\phi_{\beta}(\mathbf{w}))/E[Y]$ is sufficiently small, which, as we mentioned earlier, can be interpreted as the ability of $\mathbf{W}$ to capture sufficient information on $Y.$ Let us note that, in this condition, the uniformity with respect to $\mathbf{w}$ is an important weak point: one expects to have situations where it is harder to approximate $Y$ from $\mathbf{W},$ leading to an increase in $Var(Y|\mathbf{W}=\mathbf{w}).$ This leads to the introduction, in the next section, of a hybrid product, mixing indemnity and index insurance, where the use of index insurance is restricted to the most favorable type of events.

\begin{remark}
Let us note that all the results we derive can be extended to other forms of payout functions. For a general payout $\phi$, the idea is to replace it with an ``equivalent'' contract in terms of utility, denoted $\phi_{\beta}$. Let $\beta_0$ be such that
$$E\left[\exp(\alpha(\phi(\mathbf{W})-Y))\right]=E\left[\exp(\alpha(\phi_{\beta_0}(\mathbf{W})-Y))\right].
$$
Then, both contracts are equivalent if they have the same price, that is, if $\pi_{\phi}=(1+\theta)\beta_0 E[Y]$, where $\theta$ is the loading factor of the product defined by $\phi_{\beta_0}$. Writing $\pi_{\phi}=(1+\theta_{\phi})\pi^*_{\phi}$, we obtain
$$
\theta_{\phi}=(1+\theta)\beta_0 E[Y]/\pi^*_{\phi}-1.
$$
Hence, all the results extend through this relationship between $\theta$ and $\theta_{\phi}$.
\end{remark}

\subsection{The hybrid product}
\label{sec:hybrid}

Let 
$$\mathcal{W}_{\alpha}(\frak{e},\beta)=\left\{\mathbf{w}\in \mathcal{W}:m_Y(\alpha|\mathbf{w})-\phi_{\beta}(\mathbf{w})\leq \frak{e}\right\}.$$
We define the following pay-off:
\begin{equation}
\label{hybrid}
\frak{h}^{\frak{e}}_{\alpha,\beta}(Y,\mathbf{W})= \exp(-\tau)Y\mathbf{1}_{\mathbf{W}\in \overline{\mathcal{W}_{\alpha,\beta}(\frak{e})}}+\phi_{\beta}(\mathbf{W})\mathbf{1}_{\mathbf{W}\in \mathcal{W}_{\alpha,\beta}(\frak{e})},
\end{equation}
where $\overline{\mathcal{A}}$ is the complementary set of the set $\mathcal{A}.$

The idea is that we use index insurance only in cases where we expect this solution to be reliable. From Proposition \ref{prop_existence}, we saw that the unfavorable situations for index insurance compensation are those where $m_{Y}(\alpha|\mathbf{w})-\phi_{\beta}(\mathbf{w})$ is large, which motivates the introduction of $\mathcal{W}_{\alpha,\beta}(\frak{e}).$ We consider that $m_{Y}(\alpha|\mathbf{w})-\phi_{\beta}(\mathbf{w})$ is large if it exceeds a certain threshold $\frak{e}$. In this case, policyholders will have a higher utility if they are compensated through indemnity-based insurance (see Section \ref{sec:exputil}). As with $\beta$, the choice of $\frak{e}$ is left to the discretion of the insurer. However, the latter should keep in mind that this choice will influence both the proportion of policyholders who prefer compensation through index insurance and the maximum loading factor $\theta^{\text{max}}$ at which index insurance can be sold to a given population. We believe that, in practice, the insurer will choose the value of $\frak{e}$ corresponding to this maximum loading factor, provided that the number of policyholders willing to accept compensation through index insurance is sufficient to make the product economically viable (see Sections \ref{sec:lln} and \ref{sec:system}). The impacts of the choice of $\frak{e}$ in a practical example are discussed in Section \ref{sec:discuss_e}.

The premium $\pi_{\frak{h}}$ associated with this product is
$$\pi_{\frak{h}}=(1+\theta_Y)E\left[Y|\overline{\mathcal{W}_{\alpha,\beta}(\frak{e})}\right](1-p_{\frak{e}}(\alpha,\beta))+(1+\theta)E\left[\phi_{\beta}(\mathbf{W})|\mathcal{W}_{\alpha,\beta}(\frak{e})\right]p_{\frak{e}}(\alpha,\beta),$$
where $p_{\frak{e}}(\alpha,\beta)=\mathbb{P}(\mathbf{W}\in \mathcal{W}_{\alpha,\beta}(\frak{e})).$ We apply the same loading factor $\theta_Y$ as for indemnity insurance to cases where exact compensation is offered, the lower loading factor $\theta$ being applied only to the index part. The higher loading factor of indemnity insurance directly reflects the increased claims management costs associated with this type of coverage. This hybrid design bears a conceptual resemblance to the blended insurance scheme proposed by \cite{zhang2024blended}.

Proposition \ref{prop_hybrid} provides a condition for the hybrid product $\frak{h}^{\frak{e}}_{\alpha,\beta}$ to be chosen instead of the traditional indemnity-based contract.

\begin{proposition}
\label{prop_hybrid} 
If
\begin{equation}
\label{eq:conditional}
\eta_{\frak{e}}(\alpha,\beta)= 1-\beta+\theta_Y-\frac{\frak{e}}{E\left[Y|\mathcal{W}_{\frak{e}}(\alpha,\beta)\right]p_{\frak{e}}(\alpha,\beta)},
\end{equation}
then, if $\theta\leq \eta_{\frak{e}}\beta^{-1},$ the policyholder with risk aversion less than $\alpha$ prefers the contract defined by the pay-off $\frak{h}^{\frak{e}}_{\alpha,\beta}$ for all $\tau\geq 0.$
\end{proposition}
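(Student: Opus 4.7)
The plan is to compute both expected utilities directly under the exponential specification, closely following the approach used for Proposition \ref{prop_existence} but splitting the expectation according to whether $\mathbf{W}$ falls inside $\mathcal{W}_{\alpha,\beta}(\frak{e})$. For a policyholder with risk aversion $\tilde{\alpha}\in(0,\alpha]$, paying $\pi_{\frak{h}}$ and receiving $\frak{h}^{\frak{e}}_{\alpha,\beta}(Y,\mathbf{W})$, conditioning on $\mathbf{W}$ and using the Laplace transform notation yields
\begin{equation*}
\frak{U}_{\frak{h}}(\tilde{\alpha}) = -\frac{\exp(\tilde{\alpha}\pi_{\frak{h}})}{\tilde{\alpha}}\left\{E\!\left[e^{\tilde{\alpha}'Y}\mathbf{1}_{\overline{\mathcal{W}_{\alpha,\beta}(\frak{e})}}\right] + E\!\left[\psi_Y(\tilde{\alpha}|\mathbf{W})\,e^{-\tilde{\alpha}\phi_{\beta}(\mathbf{W})}\mathbf{1}_{\mathcal{W}_{\alpha,\beta}(\frak{e})}\right]\right\},
\end{equation*}
with $\tilde{\alpha}'=(1-e^{-\tau})\tilde{\alpha}$, while $\frak{U}_{Y,\tau}(\tilde{\alpha})=-\tilde{\alpha}^{-1}\exp(\tilde{\alpha}\pi_Y)\Psi_Y(\tilde{\alpha}')$. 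Thus $\frak{U}_{\frak{h}}(\tilde{\alpha})>\frak{U}_{Y,\tau}(\tilde{\alpha})$ reduces to an inequality between two sums of Laplace-type terms.

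Next, I would exploit the defining property of $\mathcal{W}_{\alpha,\beta}(\frak{e})$. Writing $\psi_Y(\tilde{\alpha}|\mathbf{w})e^{-\tilde{\alpha}\phi_{\beta}(\mathbf{w})}=\exp(\tilde{\alpha}[m_Y(\tilde{\alpha}|\mathbf{w})-\phi_{\beta}(\mathbf{w})])$ and using monotonicity of $\alpha\mapsto m_Y(\alpha|\mathbf{w})$ on $(0,\infty)$ (a consequence of convexity of $\alpha\mapsto\log\psi_Y(\alpha|\mathbf{w})$ vanishing at $0$), I would obtain $m_Y(\tilde{\alpha}|\mathbf{w})\leq m_Y(\alpha|\mathbf{w})\leq \phi_{\beta}(\mathbf{w})+\frak{e}$ on $\mathcal{W}_{\alpha,\beta}(\frak{e})$, hence
$E[\exp(\tilde{\alpha}[m_Y(\tilde{\alpha}|\mathbf{W})-\phi_{\beta}(\mathbf{W})])\mathbf{1}_{\mathcal{W}_{\alpha,\beta}(\frak{e})}]\leq e^{\tilde{\alpha}\frak{e}}p_{\frak{e}}(\alpha,\beta)$. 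Symmetrically, since $\tilde{\alpha}'\geq 0$ and $Y\geq 0$, $\Psi_Y(\tilde{\alpha}')\geq E[e^{\tilde{\alpha}'Y}\mathbf{1}_{\overline{\mathcal{W}_{\alpha,\beta}(\frak{e})}}]+p_{\frak{e}}(\alpha,\beta)$.

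The third step translates the assumption $\theta\leq\eta_{\frak{e}}(\alpha,\beta)/\beta$ into a price gap. Conditioning $\phi_{\beta}(\mathbf{W})=\beta E[Y|\mathbf{W}]$ on the event $\mathcal{W}_{\alpha,\beta}(\frak{e})$ gives $\pi_{\frak{h}}=(1+\theta_Y)E[Y\mathbf{1}_{\overline{\mathcal{W}_{\alpha,\beta}(\frak{e})}}]+(1+\theta)\beta E[Y\mathbf{1}_{\mathcal{W}_{\alpha,\beta}(\frak{e})}]$, so a direct computation yields $\pi_Y-\pi_{\frak{h}}=[1+\theta_Y-\beta(1+\theta)]\,E[Y\mathbf{1}_{\mathcal{W}_{\alpha,\beta}(\frak{e})}]$, and a short rearrangement shows that the assumed bound on $\theta$ is exactly equivalent to $\pi_Y-\pi_{\frak{h}}\geq\frak{e}$. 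Combining this with the two bounds from the previous step reduces the original inequality to $A+e^{\tilde{\alpha}\frak{e}}p_{\frak{e}}\leq e^{\tilde{\alpha}\frak{e}}(A+p_{\frak{e}})$ with $A=E[e^{\tilde{\alpha}'Y}\mathbf{1}_{\overline{\mathcal{W}_{\alpha,\beta}(\frak{e})}}]\geq 0$, which is immediate because $e^{\tilde{\alpha}\frak{e}}\geq 1$.

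The main subtlety is the careful bookkeeping of the two parameters involved—$\alpha$ fixing the contract region $\mathcal{W}_{\alpha,\beta}(\frak{e})$ and $\tilde{\alpha}\leq\alpha$ being the candidate policyholder's risk aversion—together with justifying the monotonicity transfer $m_Y(\tilde{\alpha}|\mathbf{w})\leq m_Y(\alpha|\mathbf{w})$ used to export the region's defining bound to any smaller aversion. The uniformity in $\tau\geq 0$ then comes for free, since the lower bound on $\Psi_Y(\tilde{\alpha}')$ uses only $\tilde{\alpha}'\geq 0$, so no additional heredity argument is needed here.
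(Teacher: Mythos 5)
Your proof is correct and its core is the same as the paper's: split the Laplace-transform expression over $\mathcal{W}_{\alpha,\beta}(\frak{e})$ and its complement, bound the index-region term by $e^{\alpha\frak{e}}p_{\frak{e}}(\alpha,\beta)$ using the defining inequality of the region, compute $\pi_Y-\pi_{\frak{h}}=[1+\theta_Y-\beta(1+\theta)]E[Y\mathbf{1}_{\mathcal{W}_{\alpha,\beta}(\frak{e})}]$, and observe that $\theta\leq\eta_{\frak{e}}\beta^{-1}$ is exactly $\pi_Y-\pi_{\frak{h}}\geq\frak{e}$. You depart from the paper in two worthwhile ways. First, you handle the extension to risk aversions $\tilde{\alpha}\leq\alpha$ \emph{inside} the main argument, via the monotonicity of $\tilde{\alpha}\mapsto m_Y(\tilde{\alpha}|\mathbf{w})$ (convexity of $\log\psi_Y(\cdot|\mathbf{w})$ vanishing at $0$), so that the region's defining bound transfers to any smaller aversion; the paper instead asserts the heredity by analogy with Proposition \ref{prop_existence}, i.e.\ via the separate Lemma \ref{lemma_heredity}. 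Your route is self-contained and arguably cleaner here, since the region $\mathcal{W}_{\alpha,\beta}(\frak{e})$ is pinned to the reference $\alpha$ and the transfer really does need the monotonicity you invoke. Second, your lower bound $\Psi_Y(\tilde{\alpha}')\geq E[e^{\tilde{\alpha}'Y}\mathbf{1}_{\overline{\mathcal{W}_{\alpha,\beta}(\frak{e})}}]+p_{\frak{e}}(\alpha,\beta)$ sidesteps the paper's rewriting of the complement term as $(1-p_{\frak{e}}(\alpha,\beta))E[\exp(\alpha'Y)]$, which is an identity only if $Y$ is independent of the event $\{\mathbf{W}\in\mathcal{W}_{\alpha,\beta}(\frak{e})\}$; your version needs no such assumption. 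One cosmetic caveat, shared with the paper: the chain of inequalities closes with $\leq$ rather than $<$ (e.g.\ when $\tau=0$ and $\pi_Y-\pi_{\frak{h}}=\frak{e}$ exactly), so strict preference requires either strict inequality in one of the bounds or a strict version of the hypothesis; this does not affect the substance.
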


From the fact that (\ref{eq:conditional}) should be non-negative, we see that the set $\mathcal{W}_{\alpha}(\frak{e},\beta)$ should not be too small, otherwise the probability $p_{\frak{e}}(\alpha,\beta)$ could make the left-hand side larger than $1-\beta+\theta_Y.$ On the other hand, one could be tempted to take a low value for $\frak{e}$ to control the difference between $m_{Y}(\alpha|\mathbf{w})$ and $\phi_{\beta}(\mathbf{w}),$ but this mechanically tends to make $\mathcal{W}_{\frak{e}}(\alpha,\beta)$ shrink. Let us also note that a too important increase of $\frak{e}$ introduces more constraints on $\theta:$ a decrease of $\eta_{\frak{e}}(\alpha,\beta)$ makes condition (\ref{eq:conditional}) easier to achieve, but the loading factor $\theta$ then should be smaller.

\begin{proof}
Similarly to the case of a full index product, the situation 
$\frak{U}_{\frak{h}^{\frak{e}}_{\alpha,\beta}}(\alpha)-\frak{U}_{Y,\tau}(\alpha)>0$ implies
\begin{equation}
\label{next}
E\left[\alpha\exp(Y-\frak{h}^{\frak{e}}_{\alpha,\beta}(Y,\mathbf{W}))\right]\leq E[\exp(\alpha' Y)]\exp(\alpha[\pi_Y-\pi_{\frak{h}}]).
\end{equation}
With $\alpha'=(1-\exp(-\tau))\alpha$

The left-hand side rewrites
\begin{align*}
    (1-p_{\frak{e}}(\alpha,\beta))E[\exp(\alpha' Y)] &+ E\left[\exp\left(\alpha\left\{m_Y(\alpha|\mathbf{W})-\phi_{\beta}(\mathbf{W})\right\}\right)\mathbf{1}_{\mathbf{W}\in \mathcal{W}_{\alpha,\beta}(\frak{e})}\right] \\ &\leq (1-p_{\frak{e}}(\alpha,\beta))E[\exp(\alpha' Y)] + \exp(\alpha \frak{e})p_{\frak{e}}(\alpha,\beta) \\ &\leq E[\exp(\alpha' Y)]\left\{(1-p_{\frak{e}}(\alpha,\beta)) + \frac{\exp(\alpha \frak{e})}{E[\exp(\alpha' Y)]}p_{\frak{e}}(\alpha,\beta)\right\} \\ &\leq E[\exp(\alpha' Y)]\exp(\alpha \frak{e}).
\end{align*}

Moreover,
$$\pi_Y-\pi_{\frak{h}}=(1+\theta_Y-\beta-\beta\theta)E[Y|\mathcal{W}_{\alpha,\beta}(\frak{e})]p_{\frak{e}}(\alpha,\beta).$$

Hence, a sufficient condition for (\ref{next}) is
$$\frac{\frak{e}}{E\left[Y|\mathcal{W}_{\alpha,\beta}(\frak{e})\right]p_{\frak{e}}(\alpha,\beta)}\leq 1-\beta+\theta_Y-\beta\theta,$$
which means that $\theta$ should be less than 
$\eta_{\frak{e}}(\alpha,\beta)\beta^{-1}.$ As for the proof of Proposition \ref{prop_existence}, $\frak{U}_{\frak{h}^{\frak{e}}_{\alpha,\beta}}(\alpha)-\frak{U}_{Y,\tau}(\alpha)>0$ implies that $\frak{U}_{\frak{h}^{\frak{e}}_{\alpha,\beta}}(\tilde{\alpha})-\frak{U}_{Y,\tau}(\tilde{\alpha})>0$ for any $\tilde{\alpha}\leq \alpha.$
\end{proof}

\subsection{Generalization to other forms of utility functions}

The exponential utility function is, of course, a specific choice that may not fully reflect the complexity of policyholders’ behavior. The choice of an exponential utility function is mainly driven by simplicity and the tractability of the resulting expressions.

The results can easily be extended to utility functions that are locally close to the exponential, under appropriate conditions.

For example, let us consider the CRRA utility function,
$$U(x)=\frac{x^{1-\gamma}}{1-\gamma}.$$
If $x=x_0+h$ remains close to a given value $x_0$, the function $U$ can classically be approximated as
$$
U(x)=\frac{x_0^{1-\gamma}\exp\left((1-\gamma)\frac{h}{x_0}\right)}{1-\gamma}+o\left(h\right).
$$

Hence, up to multiplication by a constant, the utility function is locally equivalent to an exponential utility with risk aversion parameter $\alpha=(1-\gamma)/x_0$. If $\phi(\mathbf{W})-Y$ remains small with high probability, the corresponding expected utility functions also remain close to each other. Consequently, all the results presented in the previous section can be generalized, with the lower bounds multiplied by a constant, and can therefore be interpreted as approximations.

\section{Practical illustration}
\label{sec:practical}

We illustrate in this section the results obtained previously with an example in the field of cyber insurance. We begin by studying the solvency requirements discussed in sections \ref{sec:lln}, \ref{sec:system} and \ref{sec:expsolvency}, and we end by showing how to deduce in which cases index insurance is relevant (see section \ref{sec:hybrid}). In section \ref{sec:context}, we describe the database and the choices made for the different parameters required in our models. Section \ref{sec:solvencyparam} focuses on the requirements of index insurance to ensure solvency, and finally, section \ref{sec:laplace} is devoted to the design of the hybrid product presented in section \ref{sec:hybrid}.

\subsection{Description of the context and the database}
\label{sec:context}

One of the complexities of cyber insurance lies in the difficulty of assessing the economic losses associated with a cyberattack. These losses are multifaceted and cannot be reduced to just damage to digital assets. Numerous high-profile examples, such as the Wannacry attack or the Colonial Pipeline breach, show that the damages are less related to the ransom amount than to the business interruption.

\textbf{The database.}

We consider a synthetic database of business interruptions related to cyber attacks\footnote{Available here: \url{https://github.com/dnkameni/cyber_cloud_interruption}}. The events contained in this database reflect real-life situations, but random noise has been introduced for the sake of confidentiality, and the data have been rescaled to reflect market trends which are consistent with the premium levels we consider in the following sections.

The database contains $n=10\,000$ claims. On each claim $i$, the following information is available:
\begin{itemize}
\item the loss amount $Y_i;$
\item the duration of the business interruption $T_i;$
\item $X_i\in \{t_1,\cdots,t_5\}$ is a qualitative variable related to the type of service that is impacted ;
\item an indicator variable $\delta_i \in \{0,1\}$, equal to 1 if the policyholder was able to activate a backup plan that potentially mitigated the impact of the business interruption;
\item an indication of the quality of the backup plan $B_i\in (0,1),$ corresponding to the expected proportion of activity that is preserved once the backup plan has been triggered ; this quantity is anticipated from security audit performed before the claim ;
\item $\Lambda_i=(T_i-U_i)_+$ where $U_i$ is the delay before triggering the backup plan. 
\end{itemize}
The available information just after the claim is $\mathbf{W}_i=(T_i,X_i,\delta_i,B_i,\Lambda_i).$ 

Some descriptive statistics are provided in Table \ref{tab:desc}. We see that in the absence of a backup plan, policyholders are on average exposed to a loss nearly 6\,000 euros higher than those with a backup plan. This difference is even more striking when considering that the higher losses in the absence of a backup occur over significantly shorter average interruption durations. For further analysis, see Figure \ref{fig:desities} in the appendix, which presents the densities of these two variables.

\begin{table}[h!]
    \centering
    \begin{tabular}{|c|c|c|c|c|}
    \hline
       Variable  & Mean & Minimum & Maximum & Standard deviation  \\
       \hline
       $Y$ (in $10^3$ euros)  & 109.08 & 1.09 & 2\,128.59 & 113.37 \\
       $T$ (in days) & 2.09 & 0.01 & 19.65 & 2.39 \\
       \hline
       $Y,\; \delta=1$ & 95.56 & 1.09 & 902.21 & 85.31 \\
       $T,\; \delta=1$ & 3.20 & 0.01 & 16.09 & 2.64   \\
       \hline
       $Y,\; \delta=0$ & 118.25 & 2.01 & 2\,128.59 & 128.16\\
       $T,\; \delta=0$ & 1.33 & 0.01 & 19.65 & 1.85 \\
        \hline
    \end{tabular}
    \caption{Descriptive statistics for the database.}
    \label{tab:desc}
\end{table}

The duration of business interruption $T$ is a key factor in assessing the consequences of cyber attacks, as highlighted in several studies, including \cite{tam2023quantifying}. Indeed, this variable is expected to be strongly correlated with the amount of loss. In our dataset, the linear correlation coefficient between losses and duration of business interruption is $0.57$. This value is $0.65$ for policyholders who activated a backup plan and $0.75$ for policyholders who failed to activate a backup plan. Figure \ref{fig:y_vs_t}, which shows a plot of losses due to business interruptions against the duration of the interruptions, appears to confirm these observations. Moreover, the duration of a business interruption can be measured immediately after the occurrence of the interruption, without requiring lengthy and costly expert evaluations. It is therefore quite natural to use this variable as the core component for constructing an index-based insurance product. Generally, the impact of a given duration of service interruption depends heavily on the sector of activity and the size of the affected company. In this simplified example, we consider a portfolio composed of policyholders with homogeneous profiles — that is, operating in the same sector, and of similar size.

\begin{figure}
    \centering
    \includegraphics[width=0.7\linewidth]{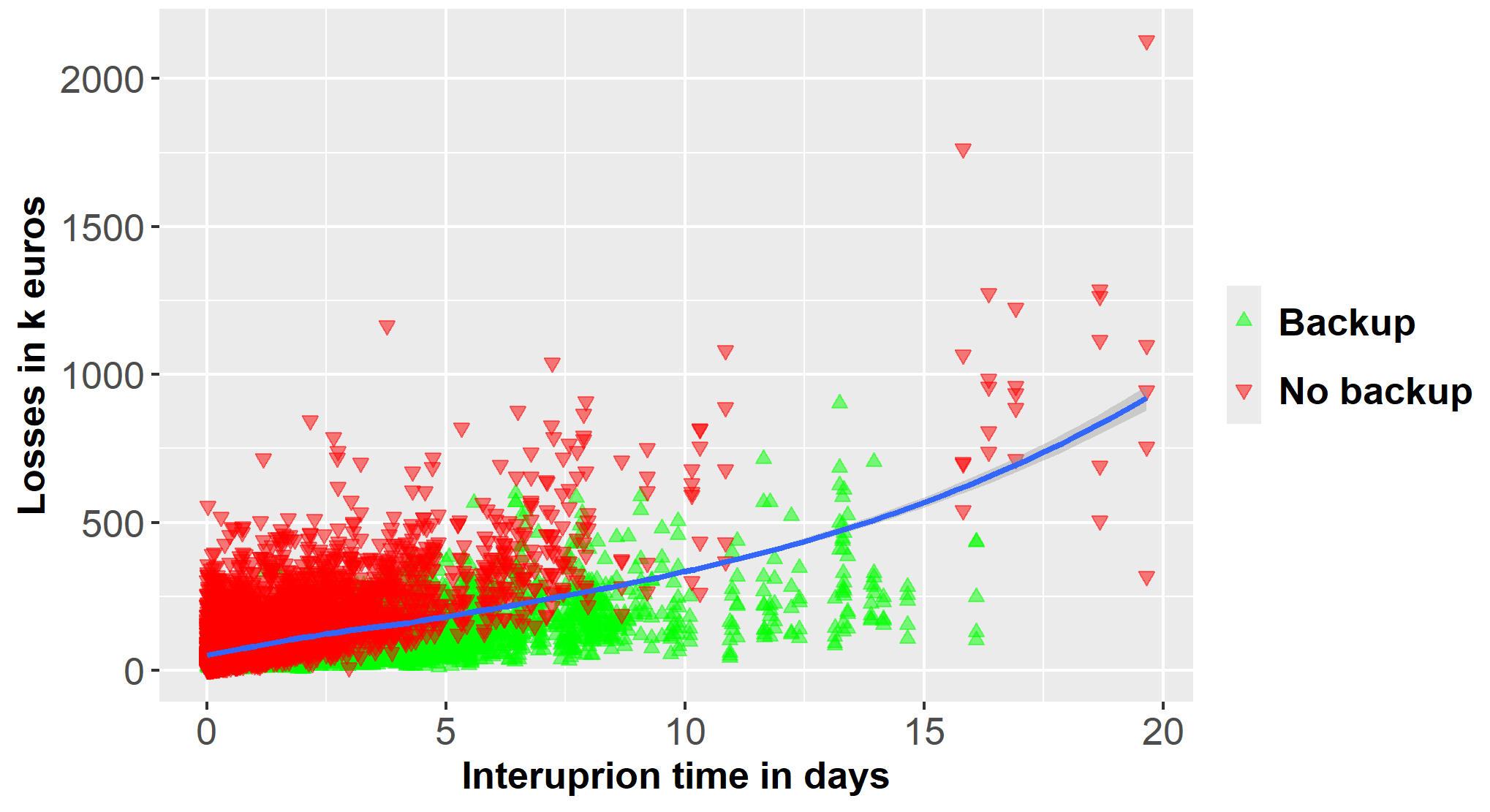}
    \caption{Losses $Y$ against duration of interruption $T$ with a distinction between policyholders whose backup plans were activated and those whose backup plans failed to kick-in. We observe an increasing relationship between losses and duration of interruption}
    \label{fig:y_vs_t}
\end{figure}

\textbf{Premium amount.}

In this simplified framework, we assume that each policyholder pays the same premium, calculated as the average compensation within the portfolio. To determine a reasonable value for the loading factor $\theta_Y$, we refer to the loss ratios observed on the market between 2020 and 2022 for medium-sized companies (according to the terminology used in the LUCY — Light Upon Cyber Insurance — report on the French market, published by AMRAE\footnote{Association pour le Management des Risques et l'Assurance de l'Entreprise. See the AMRAE report LUCY, \url{https://www.amrae.fr/bibliotheque-de-amrae/lucy-light-upon-cyber-insurance-2024-edition}}). The reported loss ratios are 45\% (2020), 36\% (2021), and 100\% (2022), yielding an average of approximately 60\%. For the purpose of this illustration we thus make the assumption that the premium $\pi_Y$ is calibrated to yield an average margin of 40\% above the pure premium, that is, we set $\theta_Y = 0.4$.

It is important to emphasize that this loading factor is not intended to accurately reflect actual market practices: premium levels can vary significantly from one customer to another, and loss ratios are not always anticipated. The current state of the cyber insurance market is still highly unstable. As shown in the same LUCY report, the amount of premiums has fluctuated considerably from year to year, illustrating the constant re-evaluation of risk by insurers in response to their prior underwriting results.

\textbf{Frequency of claims.}

The dataset we consider provides information about the severity of claims, but not about their frequency, as there is no indication of exposure. Thus, it informs us about the pay-off in the event of a claim, but not about the probability of such a claim occurring.

To assume a plausible value for frequency, we set the probability of experiencing an incident to $p = 0.06$. This value is inspired by the LUCY report (492 medium-sized companies in the sample, with 30 claims reported). Once again, we recall that, for simplification, our framework assumes that a policyholder does not experience more than one claim per year.

\textbf{Demand and risk aversion.}

In this application, we adopt an exponential utility function and must specify a measure $\mu$ to describe the distribution of risk aversion among potential policyholders. In the absence of a rigorous market study to assess price elasticity, modeling the distribution $\mu$ is the most questionable assumption. Again, the goal here is only to illustrate the methodology presented in this paper, not to provide a reliable estimate of cyber insurance demand.

We consider here that $\mu$ is a shifted exponential distribution, that is
$$d\mu(t)=\lambda \exp(-\lambda (t-\alpha_-))\mathbf{1}_{t\geq \alpha_-}.$$ To determine the value of $\alpha_-,$ we observe that the target population of policyholders who already subscribed an insurance contract accepted a price $\pi_Y.$ Since their preferences are described via an exponential utility, this is possible only if their risk aversion $\alpha$ is high enough. If we do not take into account the potential discount factor $\tau$ at this stage, this means that
$$\frac{\log \Psi_Y(\alpha_-)}{\alpha_-}=\pi_Y.$$ Estimating empirically $\alpha\rightarrow \Psi_Y(\alpha)$ from the database (see Figure \ref{fig_aversion}), we get $\alpha_-=0.049.$

Next, to consider a proper value for $\lambda,$ we make the assumption that half of the population of the policyholders is ready to accept an increase of $40\%$ of the premium. This choice is arbitrary, but is motivated by the fact the LUCY report noted an increase of 84\% of the collected premiums over 2022, for an increase of 53\% of the number of policyholders in the perimeter of the study (while deductible increase and insurance capacity stays stable). This seems to indicate that all of the current policyholders were ready to accept an increase of approximately 20\% of their premium. The proportion 40\% that is taken to set the value of $\lambda$ is then based on twice this number. This leads to $\lambda=45.08$. However, as illustrated in panel (a) of Figure \ref{fig:n_vs_a_tau_theta}, we also vary the value of $\lambda$ to observe the impact this has on the number of policyholders who accept index insurance, all things being equal.

\begin{figure}
    \centering
    \includegraphics[width=0.7\linewidth]{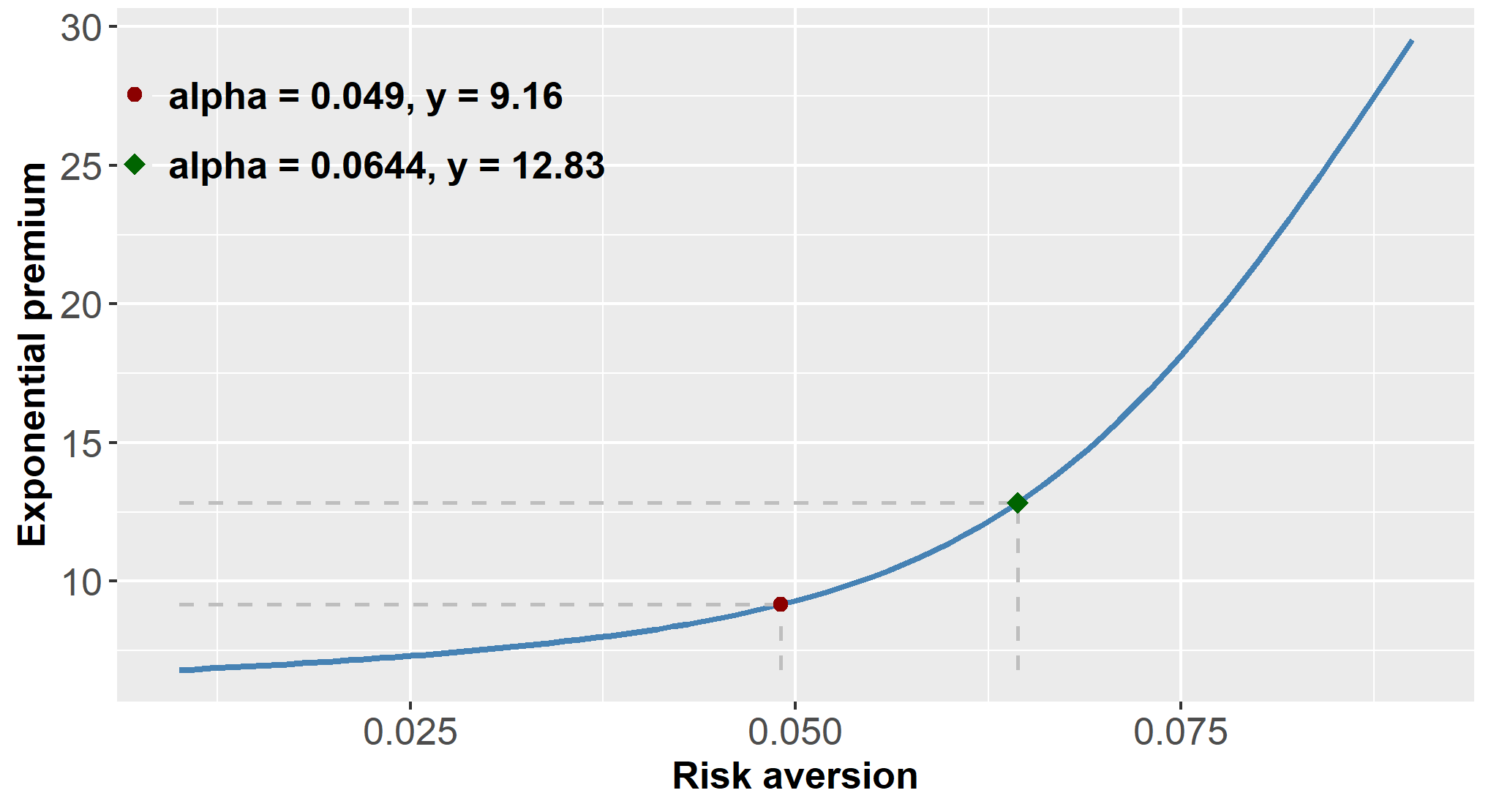}
    \caption{Estimation of $\alpha\rightarrow \log \Psi_Y(\alpha)/\alpha$ (exponential premium). The left point corresponds to the value of risk aversion corresponding to an exponential premium equal to $\pi_Y.$ The right point corresponds to the case where $\pi_Y$ increases by 40\%.}
    \label{fig_aversion}
\end{figure}

\subsection{Empirical analysis of the solvency requirements of index insurance}
\label{sec:solvencyparam}

In this section, we assume that a given population is exposed to an insurance market offering two types of insurance contracts: index insurance and indemnity-based insurance. We also assume that individuals are rational and will choose the insurance product that they expect to have the highest positive impact on their financial well-being. In other words, they will select the insurance product that maximizes their expected utility of wealth (see Section \ref{sec:demand}). We then study how the number of individuals in this target population who accept index insurance varies with certain factors that are likely to influence the demand for this type of insurance. This number is compared to the minimum number of policyholders needed to ensure, with high probability, the solvency of an index insurer in both classical situations and situations involving accumulation of losses (see Sections \ref{sec:lln} and \ref{sec:system}).

We set the tolerance level $\varepsilon$ at $0.5\%$. This choice is motivated by the regulatory requirement in Europe that insurers must hold sufficient capital to ensure a $99.5\%$ probability of remaining solvent over a one-year period (see \cite{scherer2021standard}). The distribution $\mu$ of the risk aversion coefficient $\alpha$ is set as described in section \ref{sec:context}. The values of the parameters used in the accumulation setting (see Section \ref{sec:system}) are set at $s = 0.003$, $\gamma = 0.5$, and $a = 2.4$. These values are chosen such that the condition $1 < a < \frac{\gamma \theta \varepsilon^{\gamma}}{s(1 - \varepsilon^{\gamma})}$ holds for $\theta > 0.18$ (recall that $\theta_Y = 0.4$).

To construct $\phi_\beta(\mathbf{W}) = \beta E[Y|\mathbf{W}]$, we set $\beta = 0.9$\footnote{A sensitivity analysis on $\beta$ is performed in Figure \ref{fig:theta_max}} and estimate $E[Y|\mathbf{W}]$ using five different models: a linear model (an approach similar to that of \cite{gine2007statistical}), a neural networks model, a regression tree model, a random forest model, and an eXtreme Gradient Boosting (XGBoost) model. The optimal hyperparameters obtained after tuning for these models are reported in Table \ref{tab:optimhyparam} in the appendix. The performances of these models on the in-sample data are presented in Table \ref{tab:models_phi}. The random forest and XGBoost models exhibit better performance than the linear and regression tree models. This indicates that these models have a higher probability of accurately predicting the losses suffered by policyholders for a given index value. 

\begin{table}[h!]
\centering
\begin{tabular}{lcccc}
\hline
\textbf{Model} & \textbf{$RMSE$} & \textbf{$R^2$} & \textbf{$MAE$} & \textbf{$Correlation$} \\ \hline
\textbf{Linear model}    & 4.307 & 0.599 & 2.813 & 0.774 \\
\textbf{Neural networks}  & 4.239 & 0.612 & 2.569 & 0.790 \\
\textbf{Regression trees}  & 4.197 & 0.619 & 2.787 & 0.787 \\
\textbf{Random forests}   & 3.018 & 0.811 & 1.897 & 0.901 \\
\textbf{XGBoost}         & 2.601 & 0.856 & 1.741 & 0.925 \\ \hline
\end{tabular}
\caption{Models used to build the index payout $\phi$.}
\label{tab:models_phi}
\end{table}

These high accuracies support the use of index insurance, as shown in Figure \ref{fig:n_vs_tau}. This figure presents a plot of the number of policyholders who accept index insurance in the target population against the delay in compensation of the competing indemnity-based insurance product. We observe that, for a given compensation delay in the indemnity-based product, index insurance products whose payouts are built using random forest or XGBoost models are likely to attract more policyholders. This is explained by the fact that the superior performance of these models reduces basis risk, which is one of the major challenges of index insurance. The reduced basis risk increases the satisfaction policyholders derive from index insurance, thereby enhancing the product’s attractiveness. This highlights the positive impact that machine learning can have on the design and commercialization of index insurance. However, the less performant linear and regression tree models have the advantage of being more interpretable and easier to explain to policyholders and regulators.

\begin{figure}
    \centering
    \includegraphics[width=1\linewidth]{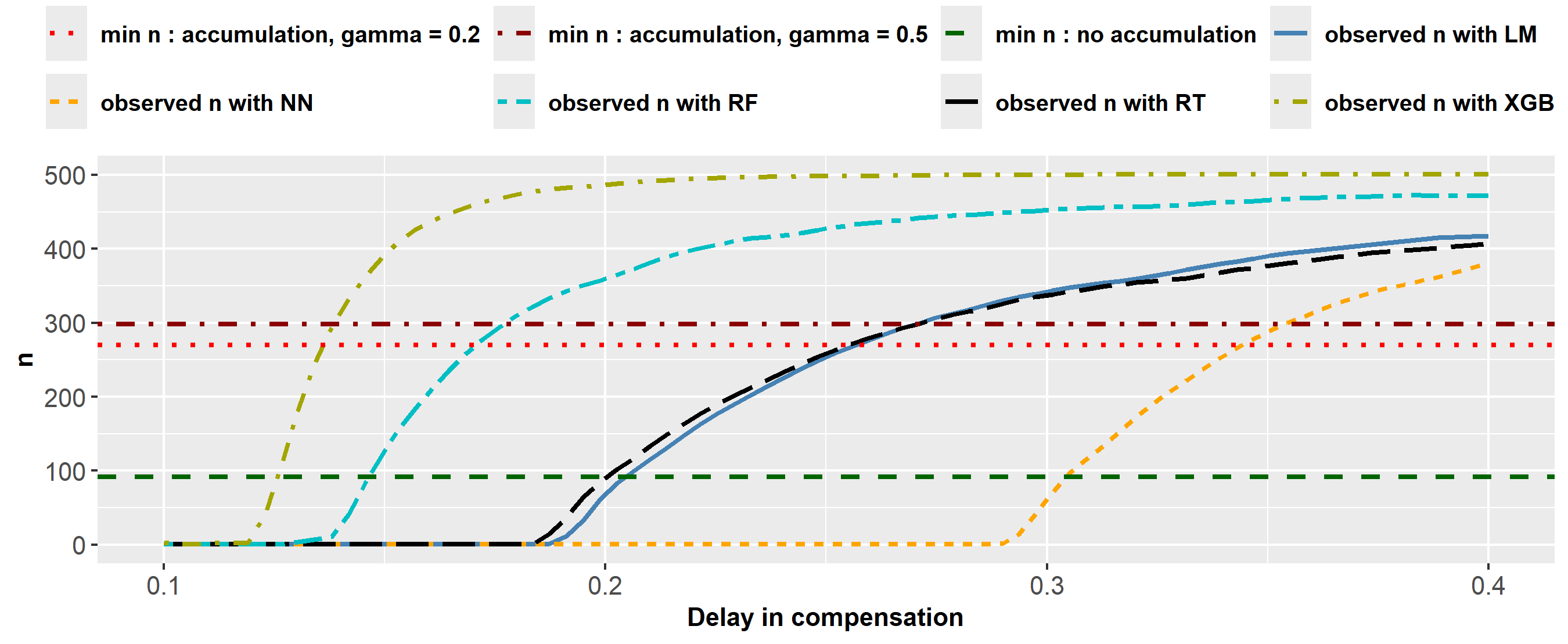}
    \caption{Number of policyholders $n$ in the population of size $N$ who prefer index insurance, as a function of the delay in compensation $\tau$ of the competing indemnity-based insurance product. The values of $n$ are represented for five index payout models namely a linear model (LM), a neural network model (NN), a random forest model (RF), a regression tree model (RT) and an extreme gradient boosting model (XGB). The reference values are computed for $\varepsilon = 0.5\%$}
    \label{fig:n_vs_tau}
\end{figure}

The results of the linear regression presented in Table \ref{tap:lm_regression} are consistent with what is typically expected in practice. Specifically, we observe that the losses suffered by policyholders increase with longer interruption times. Conversely, these losses tend to decrease in the presence of a backup plan, with longer durations of backup use, and with higher backup efficiency.

Constructing $\phi_\beta(\mathbf{W})$ is equivalent to learning the relationship between the loss $Y$ and the index $\mathbf{W}$. For this, we use the entire dataset described in Section \ref{sec:context}. However, we consider a target population of size $N = 500$, which approximately corresponds to the number of policyholders in the LUCY study (medium-sized companies). Figure \ref{fig:n_vs_a_tau_theta} shows how the number of policyholders $n$ in the population $N$ who accept index insurance varies with the mean risk aversion in the population $\Bar{\alpha}$, and the loading factor of index insurance $\theta$. This figure also illustrates the minimum number of policyholders required in the portfolio to ensure solvency with a $99.5\%$ probability under both classical and accumulation scenarios.

\begin{figure}[h!]
    \centering
    \begin{subfigure}{0.48\textwidth}
        \centering
        \includegraphics[width=\linewidth]{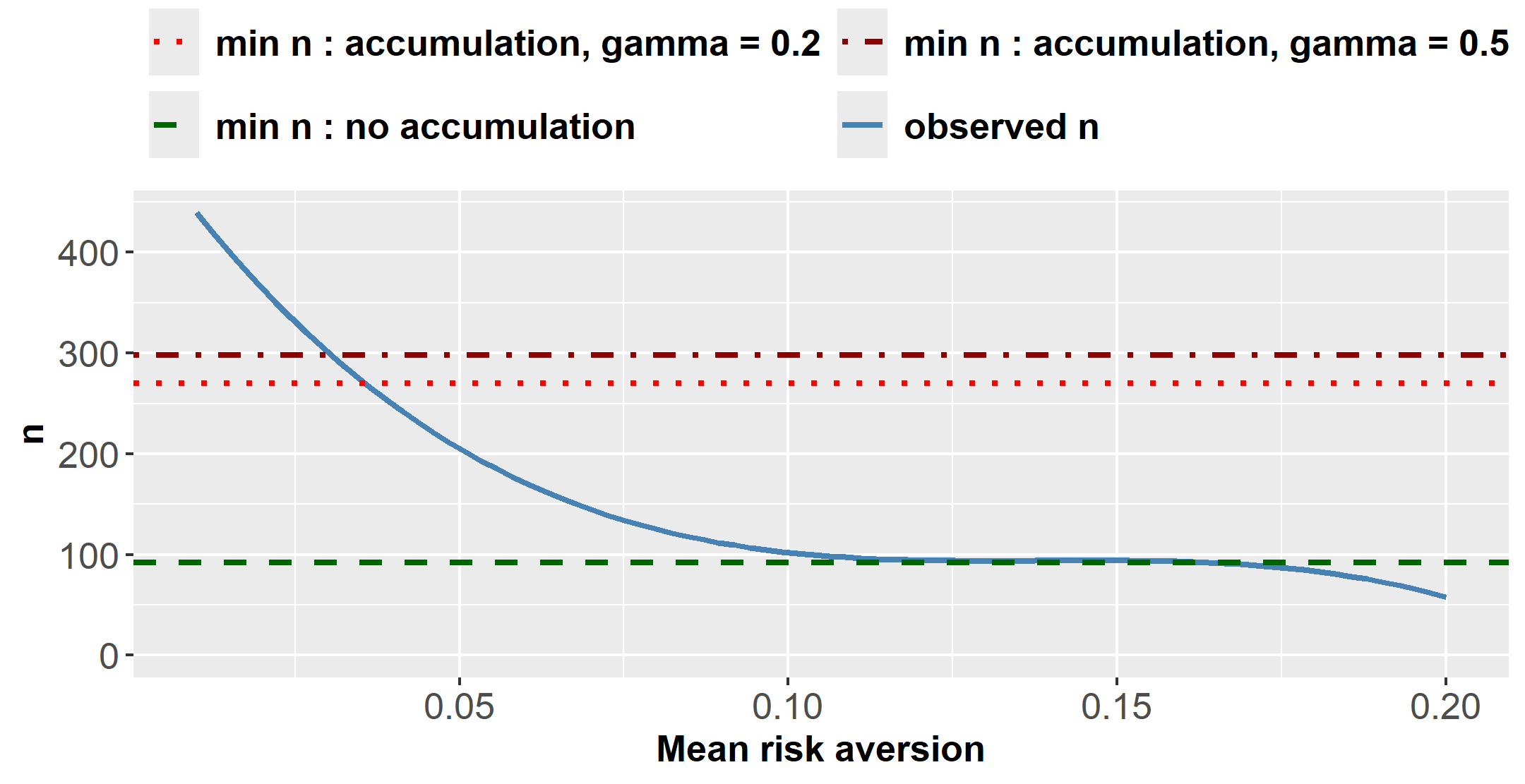}
        \caption{$n$ vs $\Bar{\alpha}$.}
    \end{subfigure}
    \begin{subfigure}{0.49\textwidth}
        \centering
        \includegraphics[width=\linewidth]{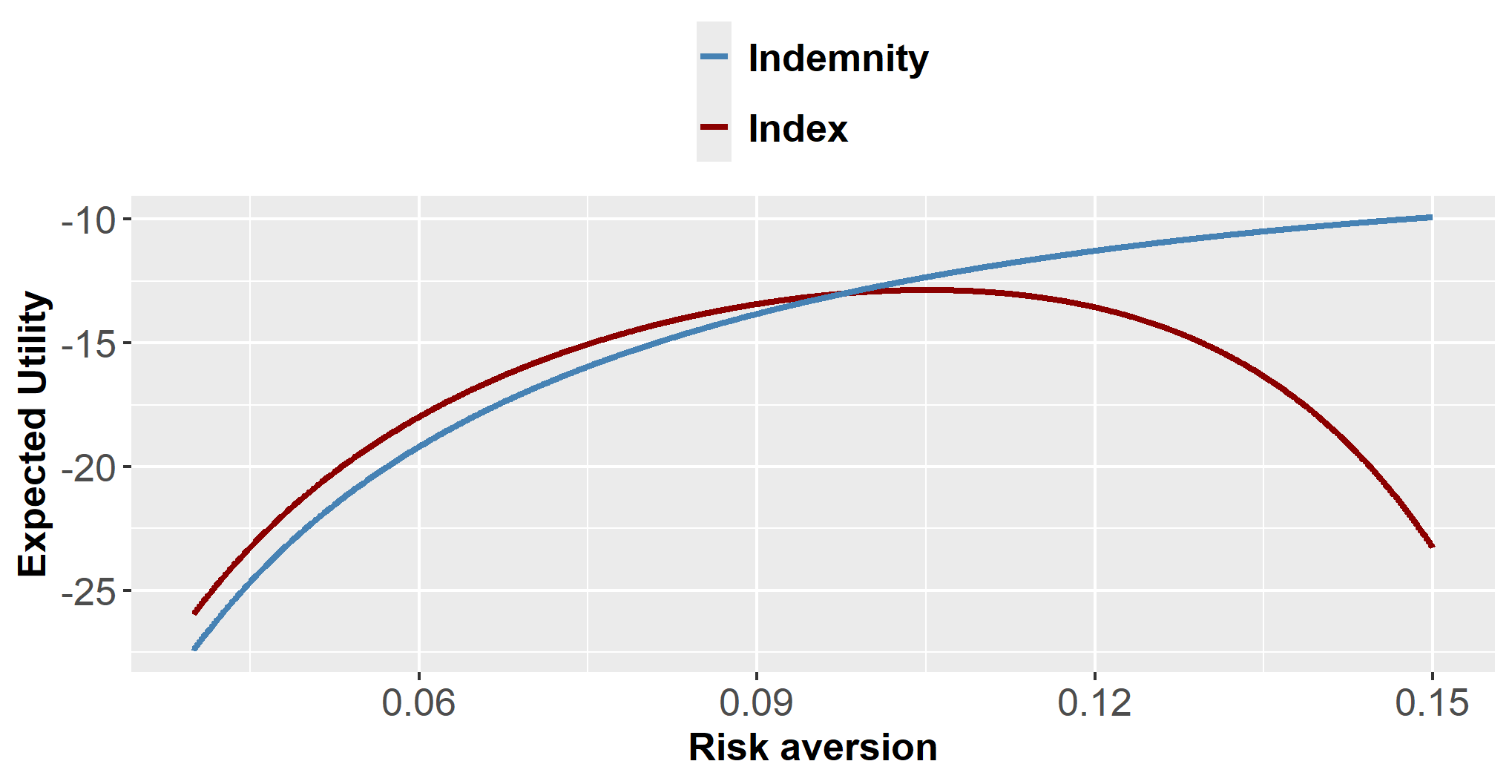}
        \caption{$\frak{U}$ vs $\alpha$.}
    \end{subfigure}
    \hfill
    \\
    \begin{subfigure}{0.7\textwidth}
        \centering
        \includegraphics[width=\linewidth]{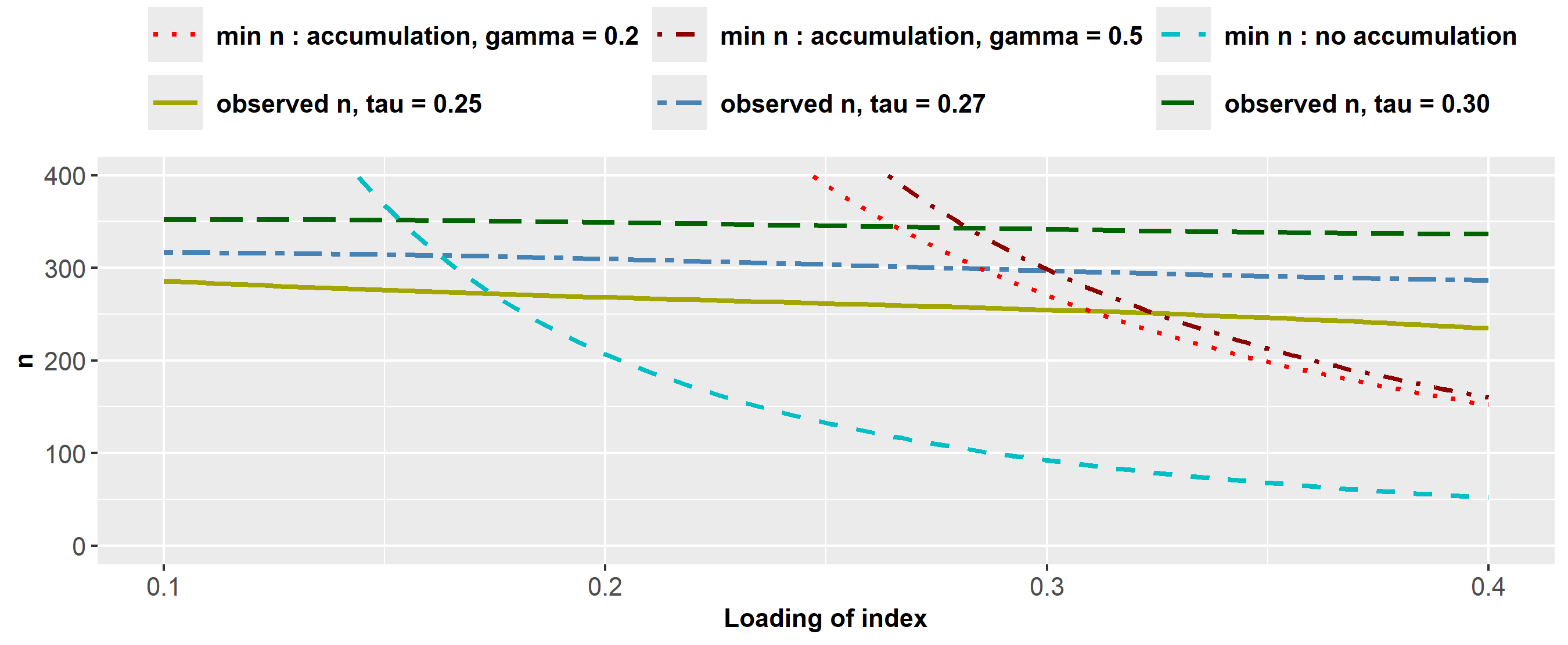}
        \caption{$n$ vs $\theta$.}
    \end{subfigure}
    \hfill
    \caption{Number of policyholders $n$ in the population of size $N$ who prefer index insurance, as a function of the mean risk aversion in the population $\Bar{\alpha} = \alpha_- + 1/\lambda$ (panel (a)) and the loading factor of index insurance $\theta$ (panel (c)). The reference values are computed for $\varepsilon = 0.5\%$. Panel (b) shows a plot of expected utility against risk aversion for index and indemnity-based insurance.}
    \label{fig:n_vs_a_tau_theta}
\end{figure}

As expected, the minimum number of policyholders in an index insurance portfolio required to ensure solvency with a probability of $99.5\%$ in the presence of accumulation is higher than the number needed when there is no accumulation. This is because greater mutualization is necessary to address accumulation and to efficiently ``dilute" large losses among policyholders. For the same reason, in the presence of accumulation, this minimum required number of policyholders increases with the heaviness of the tails of the distribution of accumulation events (see Section \ref{sec:system}), as shown in Figures \ref{fig:n_vs_tau} and \ref{fig:n_vs_a_tau_theta}.

Panels (a) and (b) of Figure \ref{fig:n_vs_a_tau_theta} empirically confirm the intuition that risk aversion works against index insurance. This conclusion aligns with that of \cite{clarke2016theory}, which states that the optimal demand for index insurance at any positive price is zero for infinitely risk-averse individuals. Panel (a) shows a steady decrease in the number of policyholders who accept index insurance as the mean risk aversion in the population increases, all else being equal. This results from the fact that, above a certain level of risk aversion, the utility policyholders derive from index insurance drops below that of indemnity insurance and continues to decrease rapidly with increasing risk aversion, as shown in panel (b). Let us note that, as risk aversion increases, both utility functions reach a maximum before decreasing (the maximum for the indemnity-based contract is attained outside the limits of the figure, but the curve exhibits the same shape). This behavior arises from the fact that all these expected utilities can be rewritten as $-\alpha^{-1}\exp(\alpha \pi)\,E[\exp(\alpha Z)]$, where $Z$ is a random variable that mostly takes positive values, and $\pi$ denotes the price of the contract under consideration. Examining the derivative of this expression reveals a change in monotonicity with respect to $\alpha$. When $Z$ tends to take larger values, the value of $\alpha$ at which this change in monotonicity occurs is reduced.

According to \cite{clarke2016theory}, the drop in utility of index insurance below that of indemnity insurance as risk aversion increases can be explained by the presence of basis risk\footnote{This is the risk that the payout of the index insurance contract differs from the actual loss suffered by policyholders \cite{clement2018global}} in index insurance. Indeed, basis risk can reduce the wealth of policyholders (on average), thereby lowering their expected well-being or expected utility. A risk-averse individual may therefore choose indemnity-based insurance instead, which in this setting provides higher compensation. However, moderately risk-averse individuals might behave differently, considering that the lower price of index insurance is sufficient to justify the basis risk. This distinct behavior of moderately risk-averse individuals is also highlighted by \cite{clarke2016theory}. Finally, note that this analysis holds if the delay in compensation of indemnity-based insurance is moderate. If the competing indemnity product on the market has a significantly longer compensation delay, then even extremely risk-averse individuals might opt for index insurance, as the delay in indemnity-based compensation would considerably reduce the expected present value of their wealth.

This observation appears to be confirmed by Figure \ref{fig:n_vs_tau}, which shows the relationship between the number of policyholders who prefer index insurance, $n$, and the delay in compensation of the competing indemnity-based insurance product, $\tau$, for a fixed distribution of risk aversion and a fixed index insurance loading factor. We observe that it might be impossible to launch an index insurance product on the market if it does not efficiently address the issue of delays in indemnity-based compensation. Specifically, if the difference in compensation speed between the two products is not sufficiently high in favor of index insurance, then the number of individuals in the population who accept index insurance will be insufficient to ensure solvency with a $99.5\%$ probability, regardless of whether accumulation is present. In practice, a low difference in compensation time could result from excessive paperwork or verification procedures imposed by a regulator on index insurance, to the extent that it loses its speed advantage.

The number of individuals in the population who prefer index insurance decreases with an increase in the loading factor $\theta$ of the index insurance contract, all else being equal. This is illustrated in Panel (c) of Figure \ref{fig:n_vs_a_tau_theta}, and it is a result that was expected. Indeed, as the loading factor of index insurance increases, policyholders are presented with a product that not only suffers from basis risk but is also priced closer to an equivalent indemnity-based insurance contract ($\theta_Y = 0.4$). Some individuals will then prefer to pay the same price for the indemnity-based product, which offers better coverage, assuming the compensation speed is reasonable. In practice, the lower loading factor of index insurance compared to its indemnity-based counterpart is achieved by eliminating claims management and expert assessment costs.

We also observe in panel (c) a decrease in the minimum number of policyholders required to ensure solvency at a $99.5\%$ probability as the index insurance loading factor increases. This is because, as index insurance becomes more expensive, the insurer earns more income from each policy sold. This increased income allows the insurer to cover all claims with a smaller portfolio, whether under classical loss or accumulation scenarios. This decreasing relationship between the solvency thresholds and $\theta$ enables the insurer to determine a minimum loading factor $\theta^{\text{min}} \approx 0.18$ at which the index insurance product can be sold competitively relative to indemnity-based insurance (with $\theta_Y = 0.4$). If an insurer were to offer the same index insurance product designed in this section to a population with the same characteristics as ours, applying a loading factor of $0.18$ would result in a portfolio of slightly more than 260 policyholders, which would be sufficient to ensure solvency with a $99.5\%$ probability in the absence of accumulation. Attempting to increase the number of policyholders by reducing the premium loading could expose the insurer to the risk of insolvency. An alternative solution could be to optimize claims management procedures (e.g., using Artificial Intelligence) to accelerate them, thereby increasing demand for the index insurance product as shown for different values of $\tau$ in panel (c). Another solution could be to educate the population about the advantages of index insurance, aiming to mitigate the influence of risk aversion on demand (see Panel (a)).

\subsection{Empirical analysis of a hybrid approach}
\label{sec:laplace}

In this section, we consider an insurance setting where an insurer can compensate losses using either an index insurance mechanism or an indemnity-based insurance mechanism. This insurer aims to propose each type of coverage where it is most suitable for policyholders and therefore more likely to be accepted. The identification of the various cases uses a criterion which is built using information on both the behavior and preferences of policyholders, as well as data on the index. However, the ultimate goal is to rely solely on index values to determine the type of coverage to apply. This section is an application of Section \ref{sec:hybrid}.

Identifying the cases where index insurance is more suitable for compensation amounts to determining the values of the index $\mathbf{w}$ who belong to the set
$$\mathcal{W}_{\alpha}(\frak{e},\beta)=\left\{\mathbf{w}\in \mathcal{W}:m_Y(\alpha|\mathbf{w})-\phi_{\beta}(\mathbf{w})\leq \frak{e}\right\},$$
for given values of $\frak{e}$ and $\beta$. The distribution $\mu$ of the risk aversion coefficient $\alpha$ is set as described in section \ref{sec:context}. This set corresponds to cases where the utility of wealth policyholders derive from index insurance compensation exceeds that which they derive from indemnity-based insurance compensation. To estimate the quantity $\Delta(\mathbf{w}) = m_Y(\alpha|\mathbf{w})-\phi_{\beta}(\mathbf{w})$—and thus determine whether $\mathbf{w}$ belongs to the set $\mathcal{W}_{\alpha}(\frak{e},\beta)$—we use two approaches:

\begin{itemize}
    \item regression tree models (see, for example, \cite{breiman2017classification} or \cite{loh2014fifty}) for their simplicity and interpretability;
    \item eXtreme Gradient Boosting (XGBoost) models for their good performances in predicting losses on our dataset (see section \ref{sec:solvencyparam}).
\end{itemize}

The choice of the model to use in practice will depend on the trade-off one is willing to make between model interpretability and model performance.

\subsubsection{\textbf{Discussion on $\frak{e}$ and methodology of analysis.}}
\label{sec:discuss_e}

Recall from Proposition \ref{prop_hybrid} that setting a value for $\frak{e}$ indirectly determines a maximum value $\theta^{\text{max}}=\eta_{\frak{e}}(\alpha, \beta)\beta^{-1}$ for the index insurance loading factor $\theta$. Additionally, the value of $\frak{e}$ influences the number or proportion ($p_{\frak{e}}(\alpha,\beta)$) of compensations for which index insurance is preferable, as it directly affects the size of the set $\mathcal{W}_{\alpha}(\frak{e},\beta)$. This dual influence of $\frak{e}$ is illustrated in Figure \ref{fig:theta_max}, using the same dataset described in Section \ref{sec:context}.

\begin{figure}[h!]
    \centering
    \begin{subfigure}{0.32\textwidth}
        \centering
        \includegraphics[width=\linewidth]{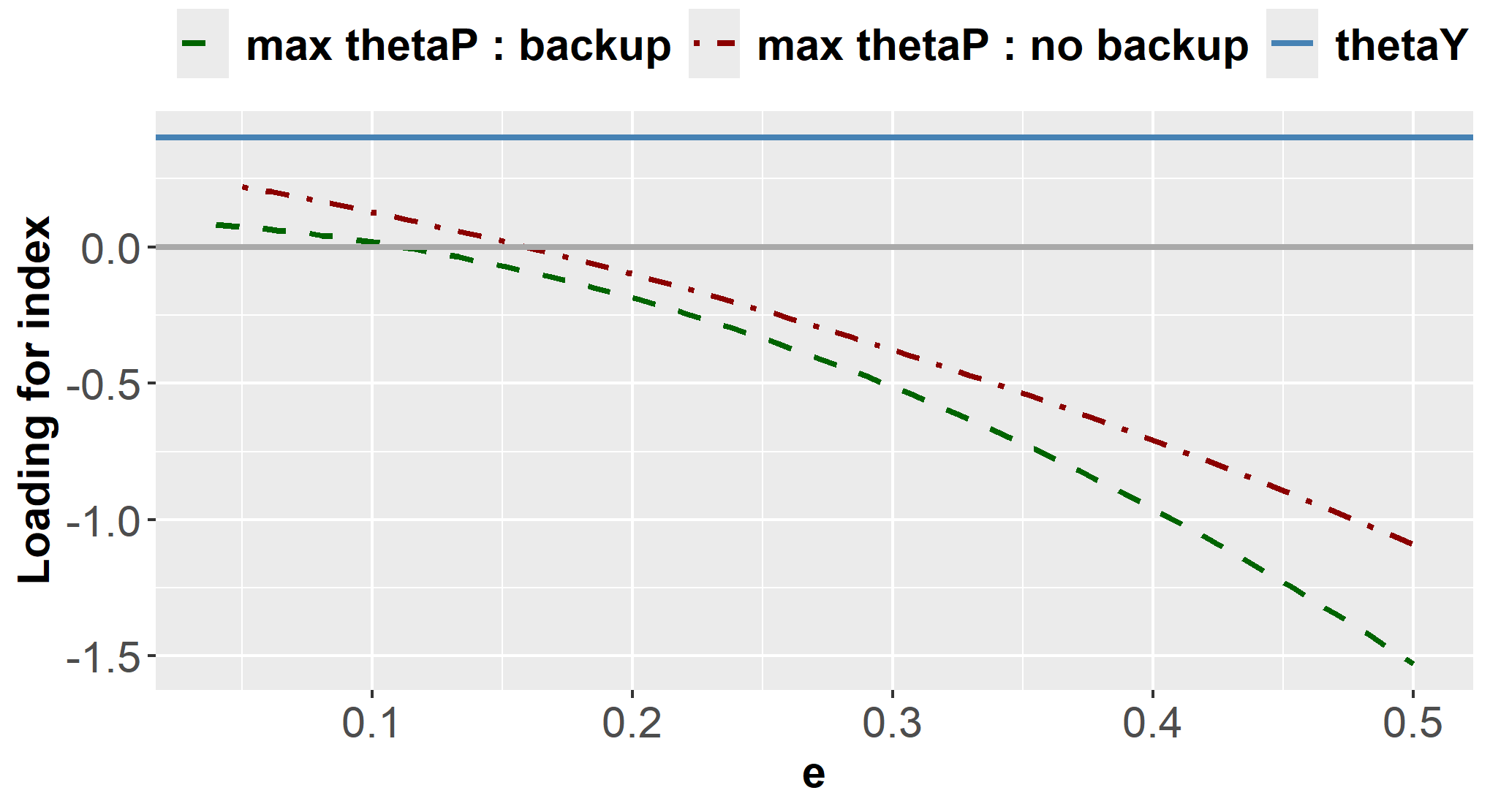}
        \caption{$\theta^{\text{max}}$ vs $\frak{e}$, $\beta = 0.80$.}
    \end{subfigure}
    \begin{subfigure}{0.32\textwidth}
        \centering
        \includegraphics[width=\linewidth]{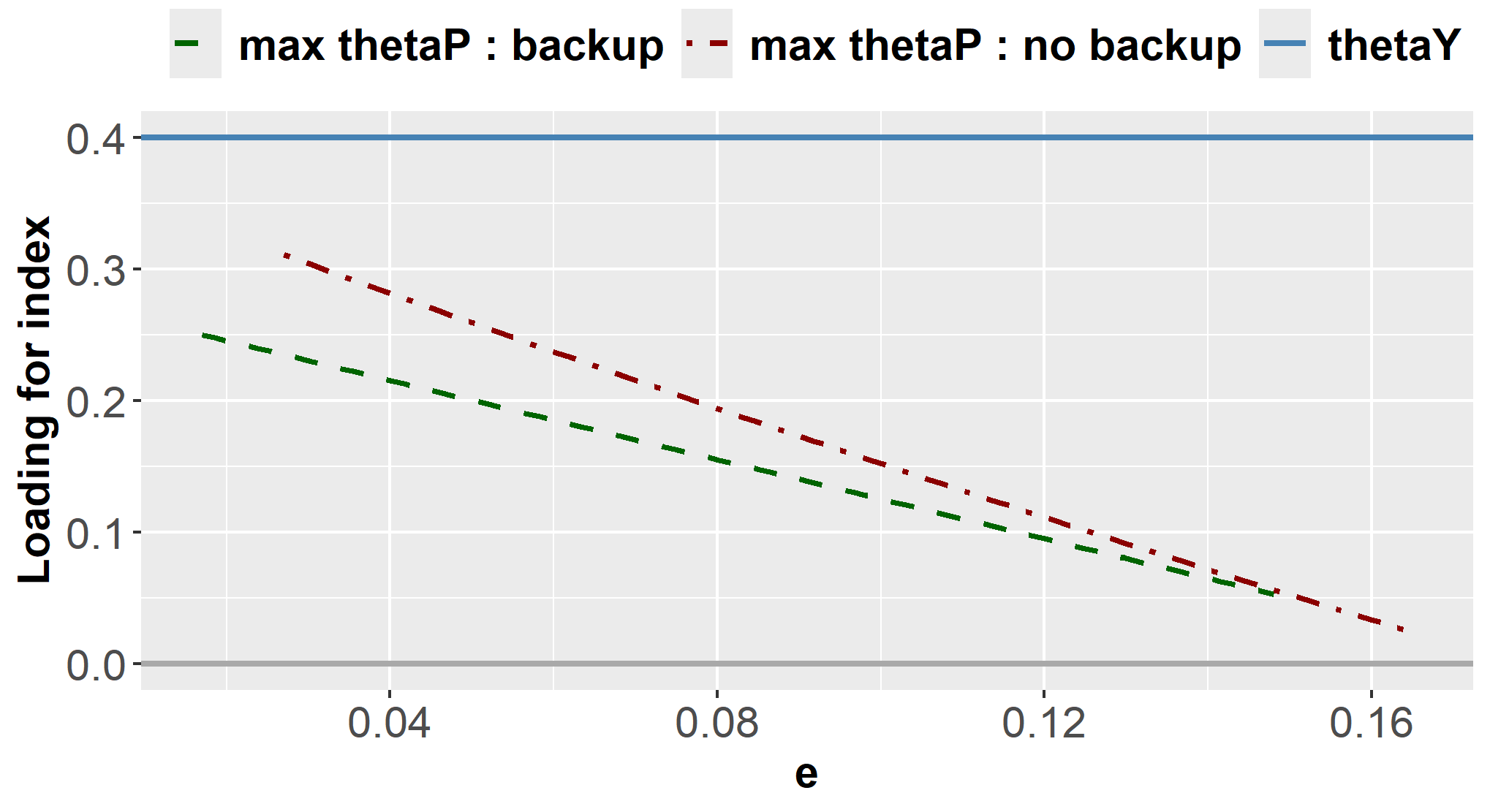}
        \caption{$\theta^{\text{max}}$ vs $\frak{e}$, $\beta = 0.90$.}
    \end{subfigure}
    \begin{subfigure}{0.32\textwidth}
        \centering
        \includegraphics[width=\linewidth]{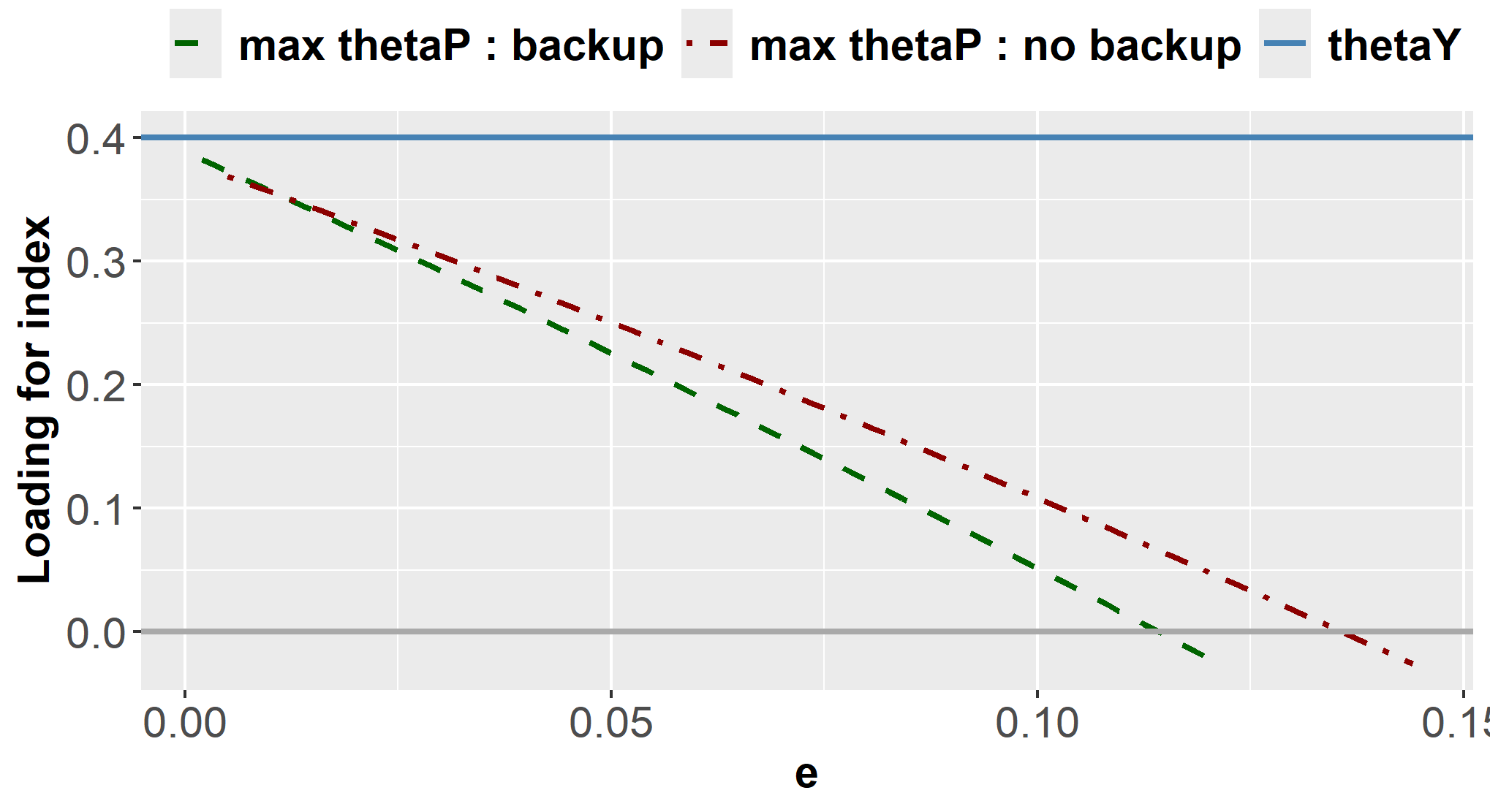}
        \caption{$\theta^{\text{max}}$ vs $\frak{e}$, $\beta = 1.00$.}
    \end{subfigure}
    \hfill
    \\
    \begin{subfigure}{0.32\textwidth}
        \centering
        \includegraphics[width=\linewidth]{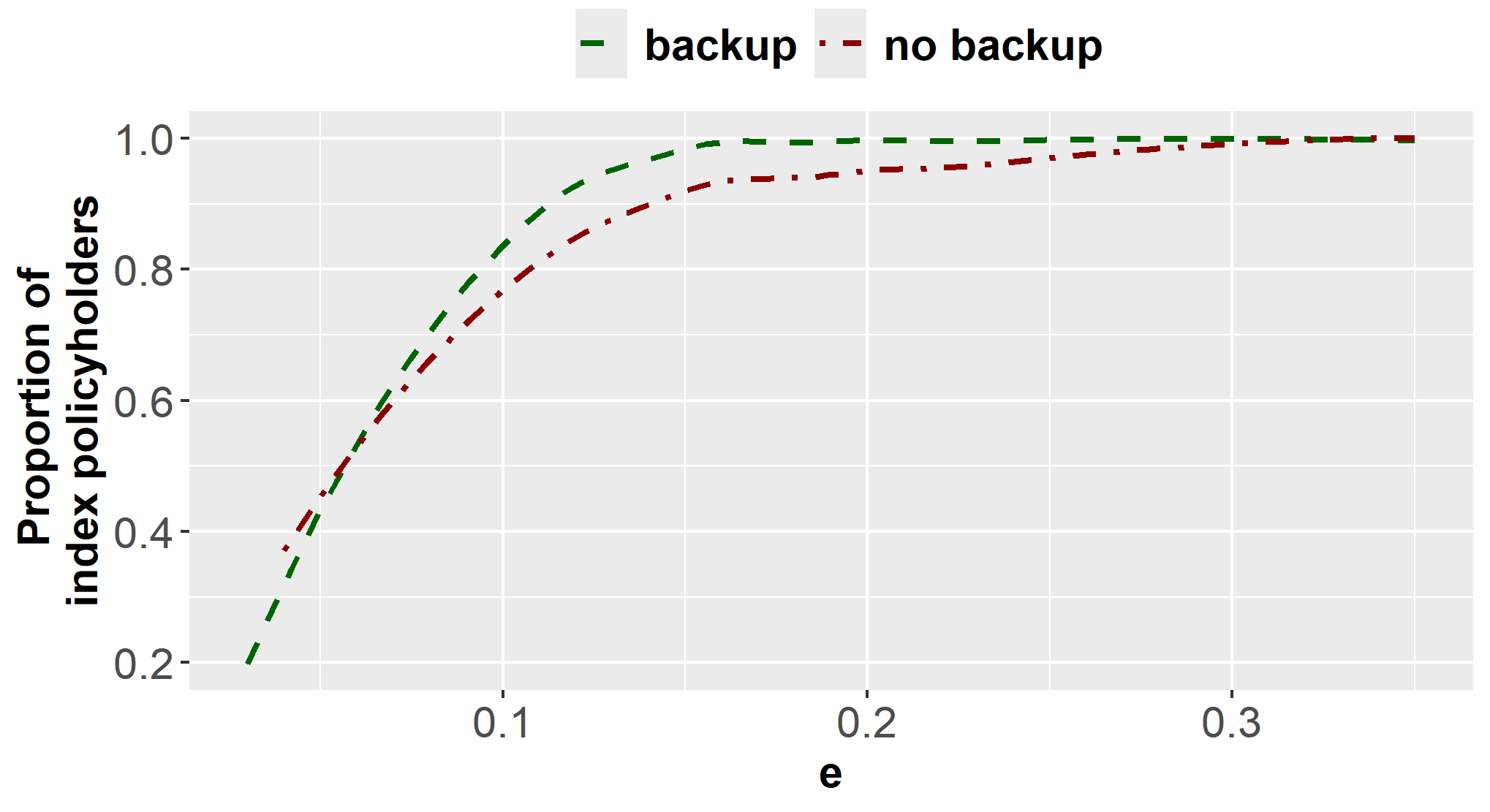}
        \caption{$p_{\frak{e}}(\alpha,\beta)$ vs $\frak{e}$, $\beta = 0.80$.}
    \end{subfigure}
    \begin{subfigure}{0.32\textwidth}
        \centering
        \includegraphics[width=\linewidth]{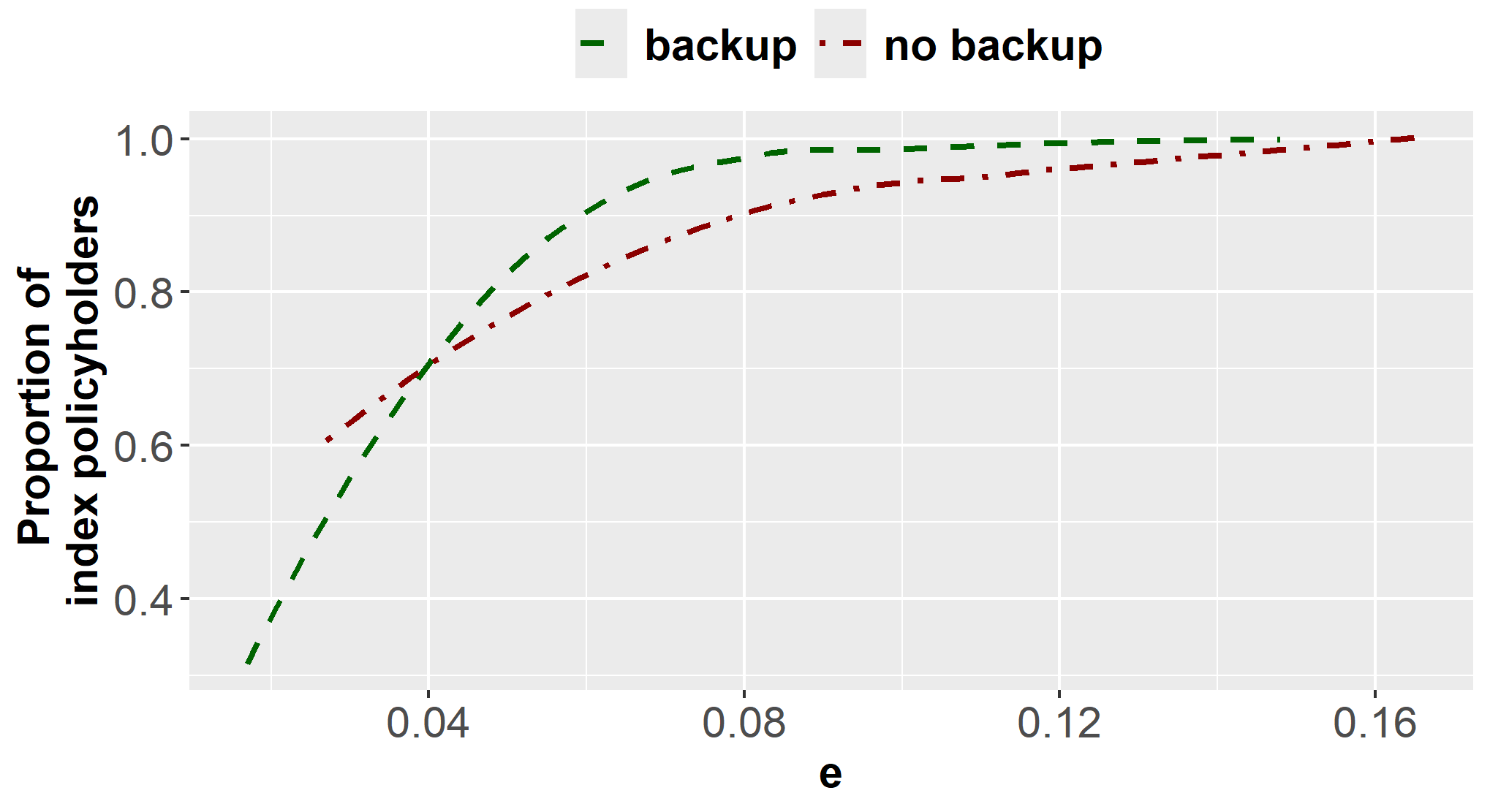}
        \caption{$p_{\frak{e}}(\alpha,\beta)$ vs $\frak{e}$, $\beta = 0.90$.}
    \end{subfigure}
    \begin{subfigure}{0.32\textwidth}
        \centering
        \includegraphics[width=\linewidth]{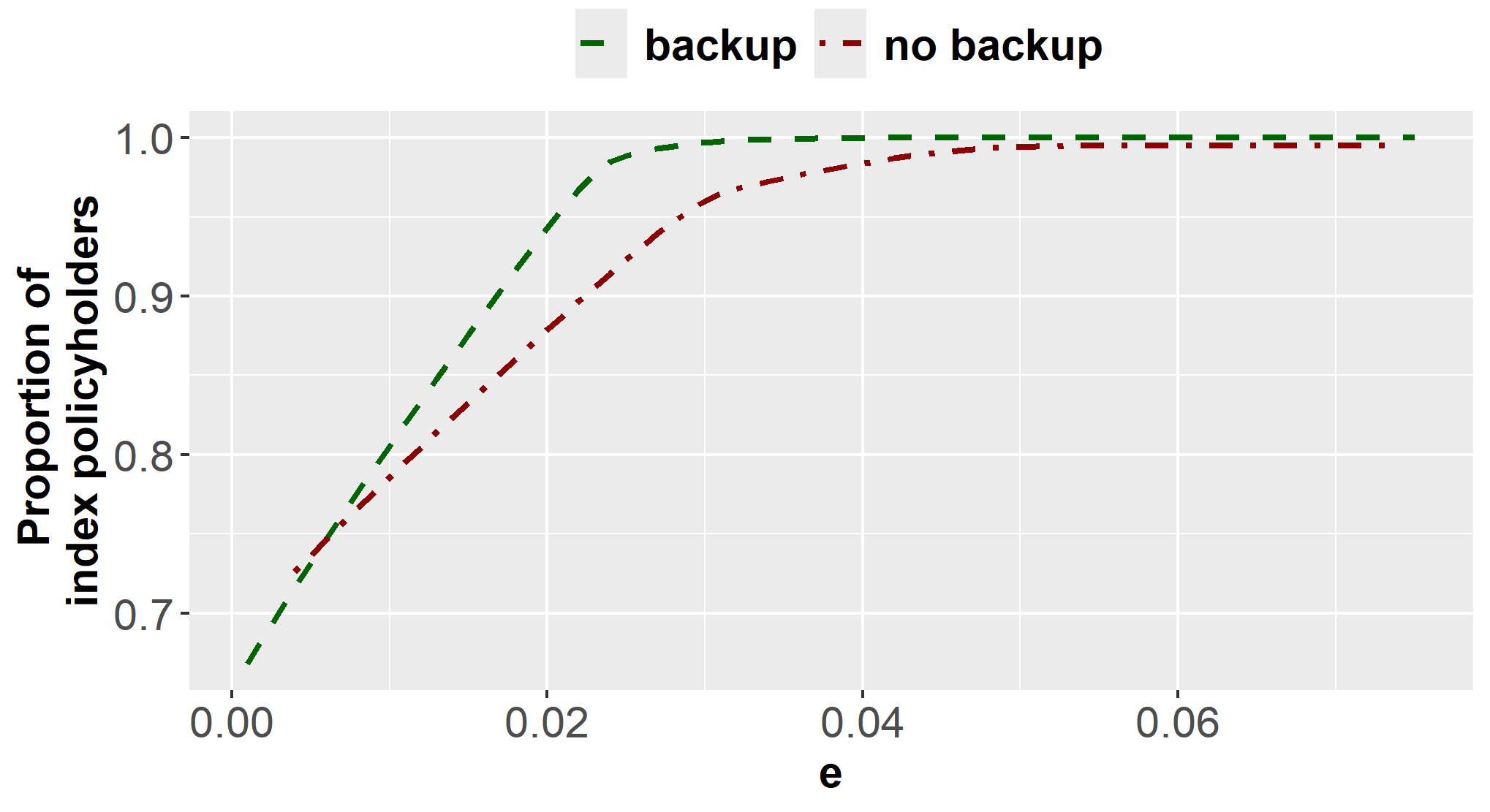}
        \caption{$p_{\frak{e}}(\alpha,\beta)$ vs $\frak{e}$, $\beta = 1.00$.}
    \end{subfigure}
    \hfill
    \caption{Maximum loading factor of index insurance $\theta^{\text{max}}= \eta_{\frak{e}}(\alpha, \beta)\beta^{-1}$ (panels (a), (b) and (c)) and proportion of compensations for which index insurance is preferable $p_{\frak{e}}(\alpha,\beta)$ (panels (d), (e) and (f)) as functions of the parameter $\frak{e}$ for various values of $\beta$. The value of $\theta_Y$ is also plotted in panels (a), (b), and (c) for comparison.}
    \label{fig:theta_max}
\end{figure}

Figure \ref{fig:theta_max} shows that increasing the value of $\frak{e}$ reduces the maximum loading factor $\theta^{\text{max}}$ of index insurance and thus lowers the price at which the index insurance contract can be sold to policyholders. Conversely, increasing $\frak{e}$ raises the proportion of compensations for which index insurance is more suitable. A direct explanation for this is that a higher value of $\frak{e}$ makes the condition $m_Y(\alpha|\mathbf{w}) - \phi_{\beta}(\mathbf{w}) \leq \frak{e}$ easier to satisfy, thereby increasing the size of the set $\mathcal{W}_{\alpha}(\frak{e},\beta)$. 

In practice, this means that when choosing $\frak{e}$, an insurer may need to balance between the proportion of index insurance compensations in their portfolio and the maximum loading that can be applied to the index insurance product. Indeed, Figure \ref{fig:theta_max} empirically demonstrates that increasing the share of index insurance compensation in a portfolio implies the necessity of offering the index insurance component of the hybrid contract at a lower price.

The final choice of $\frak{e}$ will be made following a solvency and ruin analysis, similar to that conducted in Section \ref{sec:solvencyparam}. This analysis will help the insurer select the value of $\frak{e}$ that maximizes revenue while minimizing the probability of ruin. The existence of such an optimal value is expected due to the dual and opposing effects of $\frak{e}$ on both the maximum loading of index insurance and the proportion of compensations for which index insurance is preferable. Other advantages of index insurance, such as faster compensation (which can foster client loyalty), reduced volatility, and simplified claims management, could also be factored into the insurer’s decision regarding the final proportion of index insurance compensations in their portfolio.

With this in mind, we propose Algorithm \ref{algo:index_select} to identify the cases where compensation via index insurance is the best option.

\begin{algorithm}
    \caption{Identification of compensation preferences}\label{algo:index_select}
    \begin{algorithmic}[1]
    \State $\alpha \gets$ estimate of a target risk aversion\footnotemark[1]
    \State $\beta \gets$ a suitable value to control overcompensation
    \State Set an optimal $\frak{e}^*$ based on $\theta^{\text{max}}(\alpha,\beta,\frak{e})$ and $p_{\frak{e}}(\alpha,\beta)$\footnotemark[2]
    \State Compute $\theta^{\text{max}}(\alpha,\beta,\frak{e}^*)$ and $p_{\frak{e}^*}(\alpha,\beta)$
    \State Train a model (regression trees and XGBoost in our case) to predict:
    $$\Delta(\mathbf{w}) = m_Y(\alpha|\mathbf{w}) - \phi_{\beta}(\mathbf{w})$$
    \State Flag all cases (leaves for regression trees and individuals for XGBoost) where $\displaystyle\sup_{\mathbf{w} \in \mathcal{W}} \Delta(\mathbf{w}) \leq \frak{e}^*$ as suitable for index insurance and the rest as suitable for indemnity-based insurance
    \end{algorithmic}
\end{algorithm}

\footnotetext[1]{According to proposition \ref{prop_hybrid}, if $\theta \leq \theta^{\text{max}}$, then all policyholders with risk aversion less than $\alpha$ will prefer the proposed hybrid contract compared to a full indemnity-based contract.}

\footnotetext[2]{The optimal value of $\frak{e}$ could be set by considering solvency requirements (see sections \ref{sec:lln} and \ref{sec:system}), profitability targets, and expert recommendations on the desired structure of the insurance portfolio (see figure \ref{fig:theta_max}).}

Applying this algorithm gives the insurer a tool to identify the cases where compensation via index insurance is more suitable. These cases constitute a proportion $p_{\frak{e}^*}(\alpha,\beta)$ of all compensations in the portfolio. Furthermore, the insurer has an indication on the maximum loading factor that can be applied to the index insurance component of the proposed hybrid product, given by $\theta^{\text{max}}(\alpha,\beta,\frak{e}^*)$. 

Regarding the choice of $\beta$, Figure \ref{fig:theta_max} shows that low values of $\beta$, although advantageous in considerably reducing the risk of overcompensation, also decrease the attractiveness of index insurance compensation. Indeed, panels (a) and (d) of Figure \ref{fig:theta_max} show that for $\beta = 0.8$, if the insurer applies the maximum possible loading to index insurance, only 20\% of the portfolio will prefer index insurance in the presence of a backup plan. This proportion rises to 70\% when $\beta = 1$ (see panels (c) and (f) of Figure \ref{fig:theta_max}). This can be explained by the fact that for very low values of $\beta$, the reduction in the price of the index insurance product fails to offset the decrease in utility or satisfaction experienced by policyholders due to the corresponding reduction in index insurance compensation. As a result, indemnity-based insurance compensation appears to be a more attractive option. In the rest of this section, we set $\beta = 0.9$. Since the efficiency of the backup plan plays a particular role in the severity of losses, we distinguish between claims associated with no backup plan (or a backup plan that was not triggered quickly enough, $\delta=0$) and those where the backup program successfully reduced the impact of the incident ($\delta=1$).

\subsubsection{\textbf{Applications with regression trees and XGBoost models.}}

In these applications, the values of $\frak{e}^*$ are chosen arbitrarily to illustrate our methodology and the proposed hybrid product. Figure \ref{fig:reg_tree} and figure \ref{fig:reg_xgb} display the two regression trees and the two scatter plots obtained in our setting for the regression trees and the XGBoost models respectively.

\begin{figure}[h!]
    \centering
    \begin{subfigure}{1.0\textwidth}
        \centering
        \includegraphics[width=\linewidth]{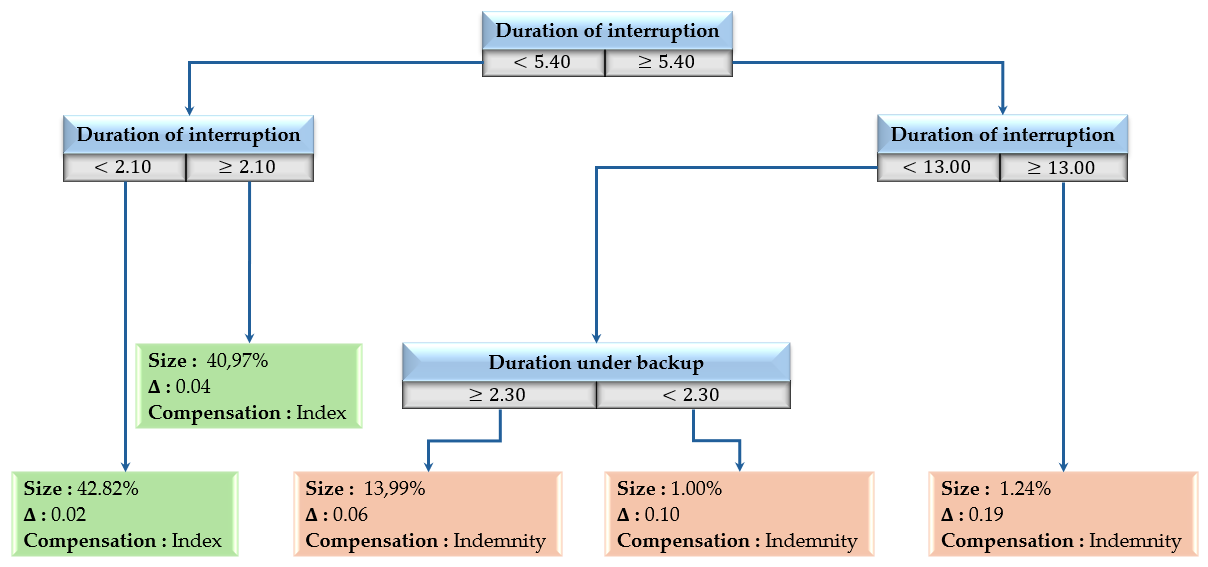}
        \caption{Backup plan was successful ($\delta=1$).}
    \end{subfigure}
    \\
    \begin{subfigure}{1.0\textwidth}
        \centering
        \includegraphics[width=\linewidth]{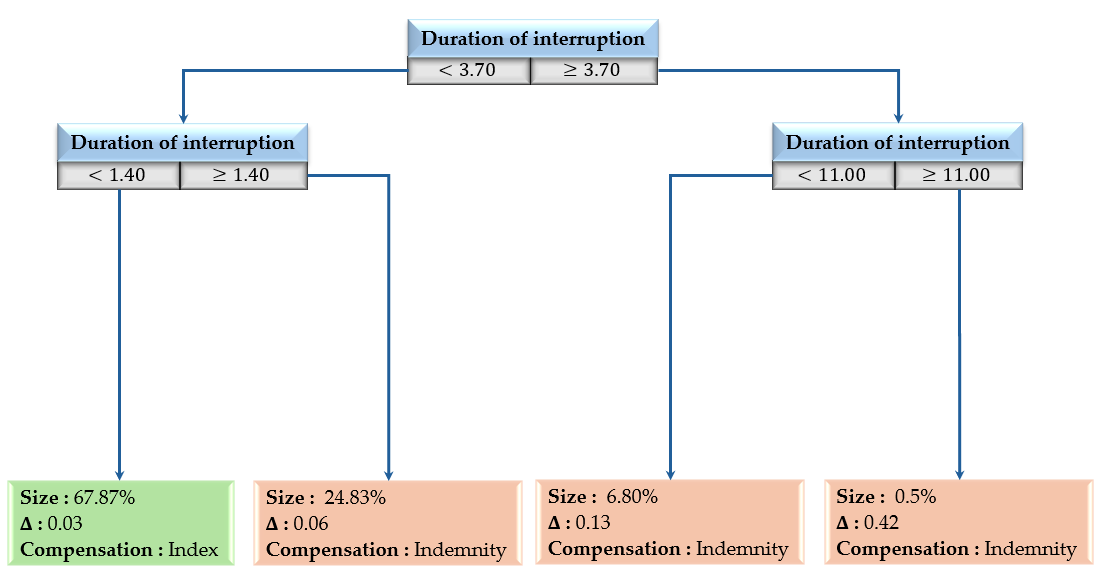}
        \caption{Backup plan failed ($\delta=0$).}
    \end{subfigure}
    \caption{Regression trees used to cluster compensations according to preferences in compensation type (index or indemnity). Panel (a) is the tree for situations in which the backup plan was successful and panel (b) is the tree for situations in which the backup plan failed}. For each leaf, information is provided on the proportion of compensations in that leaf (size), the value of $\Delta(\mathbf{w})$ for that leaf, and the decision (for the value of $\frak{e}^*$ chosen).
    \label{fig:reg_tree}
\end{figure}

Applying algorithm \ref{algo:index_select} with regression trees yields a maximum loading of the index insurance part of the contract ($\theta^{\text{max}}$) equal to $0.32$ in cases where the backup plan is triggered on time, and $0.26$ in cases where the backup plan fails or is absent. These values are respectively $0.27$ and $0.22$ when algorithm \ref{algo:index_select} is used with an XGBoost model. Recall that the loading factor for the indemnity-based part of the contract is $0.40$. When regression trees are used, our methodology reveals that compensation with index insurance is preferable in $83.79\%$ of cases, when the backup plan is triggered fast enough. This proportion is $67.87\%$ in situations where the backup plan fails or is absent. When an XGBoost model is used, the previous proportions are respectively $89.43\%$ and $49.60\%$. These percentages correspond to the shares of policyholders who suffered a loss in the portfolio who are likely to prefer compensation via index insurance.

Note that the XGBoost model allows for individual identification of the preferred or ideal compensation types, unlike the regression tree models, which perform a clustering of compensations and determine the ideal compensation type per cluster. This difference in approach, along with the difference in predictive performance between the two models, are among the reasons that could justify the differences in the values of $\theta^{\text{max}}$ and $p_{\frak{e}^*}(\alpha,\beta)$ obtained from the two models.

\begin{figure}[h!]
    \centering
    \begin{subfigure}{0.49\textwidth}
        \centering
        \includegraphics[width=\linewidth]{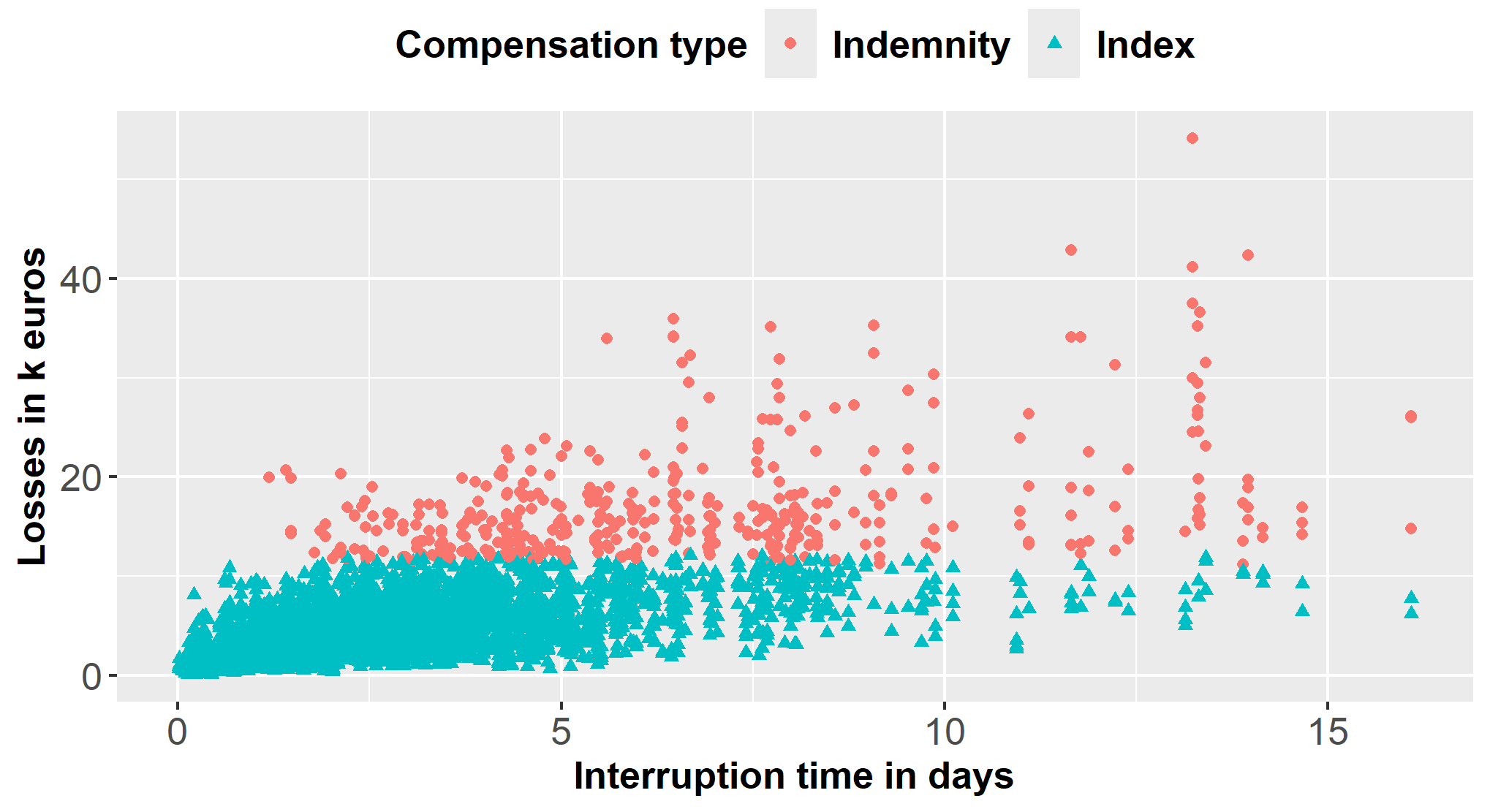}
        \caption{Backup plan was successful ($\delta=1$).}
    \end{subfigure}
    \begin{subfigure}{0.49\textwidth}
        \centering
        \includegraphics[width=\linewidth]{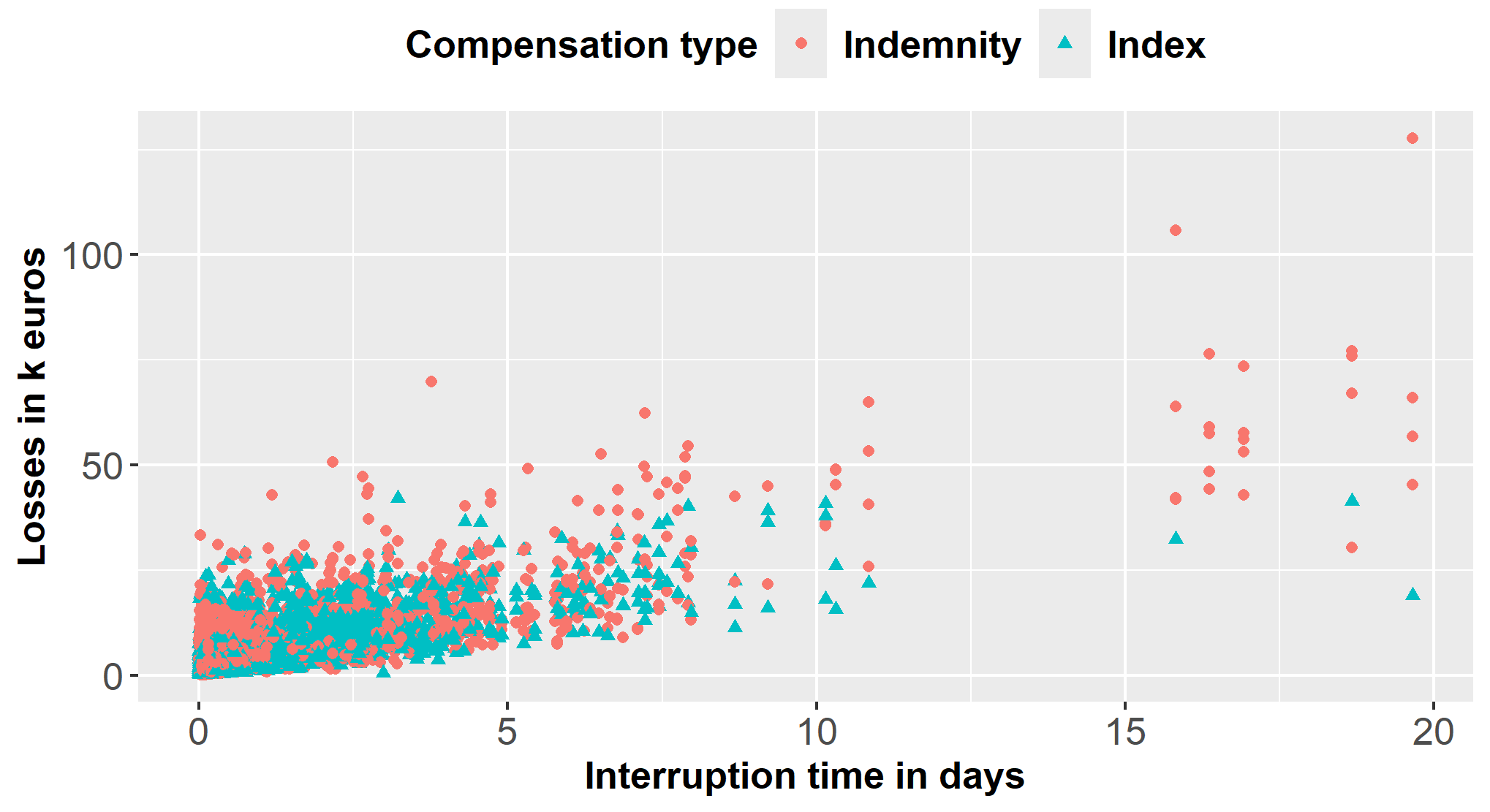}
        \caption{Backup plan failed ($\delta=0$).}
    \end{subfigure}
    \caption{Scatter plots of losses against interruption time. Panel (a) shows the plot for situations in which the backup plan was successful and panel (b) shows the plot for situations in which the backup plan failed. The ideal compensation type obtained by applying algorithm \ref{algo:index_select} with XGBoost models can be observed in both cases. In this case, $\Delta(\mathbf{w})$ is calculated at an individual level and not in clusters as is the case with regression trees models.}
    \label{fig:reg_xgb}
\end{figure}

The results presented in figure \ref{fig:reg_tree} and in figure \ref{fig:reg_xgb} support the use of index insurance compensation for short durations of business interruption and, consequently, for small losses (as clearly shown in panel (a) of figure \ref{fig:reg_xgb}). This is a favorable outcome, as index insurance tends to exhibit lower basis risk for small losses. Indeed, \cite{lopez2023parametric} show that basis risk generally increases with the size of the loss. In other words, our methodology recommends applying index insurance in contexts where it performs best. Combined with the other advantages of index insurance, this may explain why policyholders in such situations show a preference for it. 

Alternative segmentation techniques could also be used for this identification task. Artificial Intelligence could be incorporated to either improve the efficiency of the choice of $\frak{e}^*$ or enhance the quality of the segmentation of compensations. Once the cases in which a preference for compensation with index insurance are identified, a more detailed analysis of the claims and policyholders’ profiles within each preference group could be conducted to build an index insurance payout model tailored to each group where it is preferred. This approach could help reduce overall basis risk and improve the quality of the proposed hybrid insurance contract. Authors such as \cite{hernandez2023role}, \cite{lin2023evolution}, and \cite{cesarini2021potential} also highlight the significant contributions that Artificial Intelligence and Machine Learning could bring to the development of index insurance payout models.

\section{Conclusion}

In this paper, we proposed a framework to analyze the introduction of an index insurance product in competition with a traditional indemnity insurance product. The index insurance product should be attractive enough to convince a sufficient number of policyholders in order to ensure the solvency of the portfolio. Our analysis suggests that the number of policyholders willing to accept index insurance can be influenced by factors such as the risk aversion of policyholders, the delay in compensation of the competing indemnity-based insurance product, and the price of the index insurance product. 

We also propose the use of a hybrid product, where index insurance is used as compensation only in situations where it is most suitable for policyholders, with indemnity-based insurance used otherwise. This combination has several advantages, including accelerating compensation and reducing the premiums paid by policyholders. 

Additionally, we developed an algorithm to help insurers identify the specific losses for which a compensations using index insurance are more likely to be preferred and accepted by policyholders and those for which compensations using indemnity-based insurance are more suitable. Beyond the concept of index-based insurance, the results from the hybrid insurance section can also be applied for claim management in traditional indemnity-based insurance contracts. This could be relevant in situations where the insurer aims to reduce expert costs when handling certain claims or when an insurer wishes to apply different claims management policies to various segments of their portfolio based on specific criteria.

It is important to note that we do not cover one of the most appealing aspects of index insurance in this paper, which is offering coverage for claims not covered by traditional indemnity contracts. The present paper focuses solely on the introduction of index insurance in a context where traditional indemnity-based insurance is already in place. The question of modeling the demand for index insurance in situations where indemnity-based insurance is unavailable will be addressed in future work and will require a more nuanced approach to calibrate the utility function.

\textbf{Acknowledgment:} Olivier Lopez acknowledges funding from the Excellence Chair CARE (Allianz, Ensae, Risk Fundation).

\section{Appendix}

\subsection{Additional information}

\begin{figure}[H]
    \centering
    \begin{subfigure}{0.48\textwidth}
        \centering
        \includegraphics[width=\linewidth]{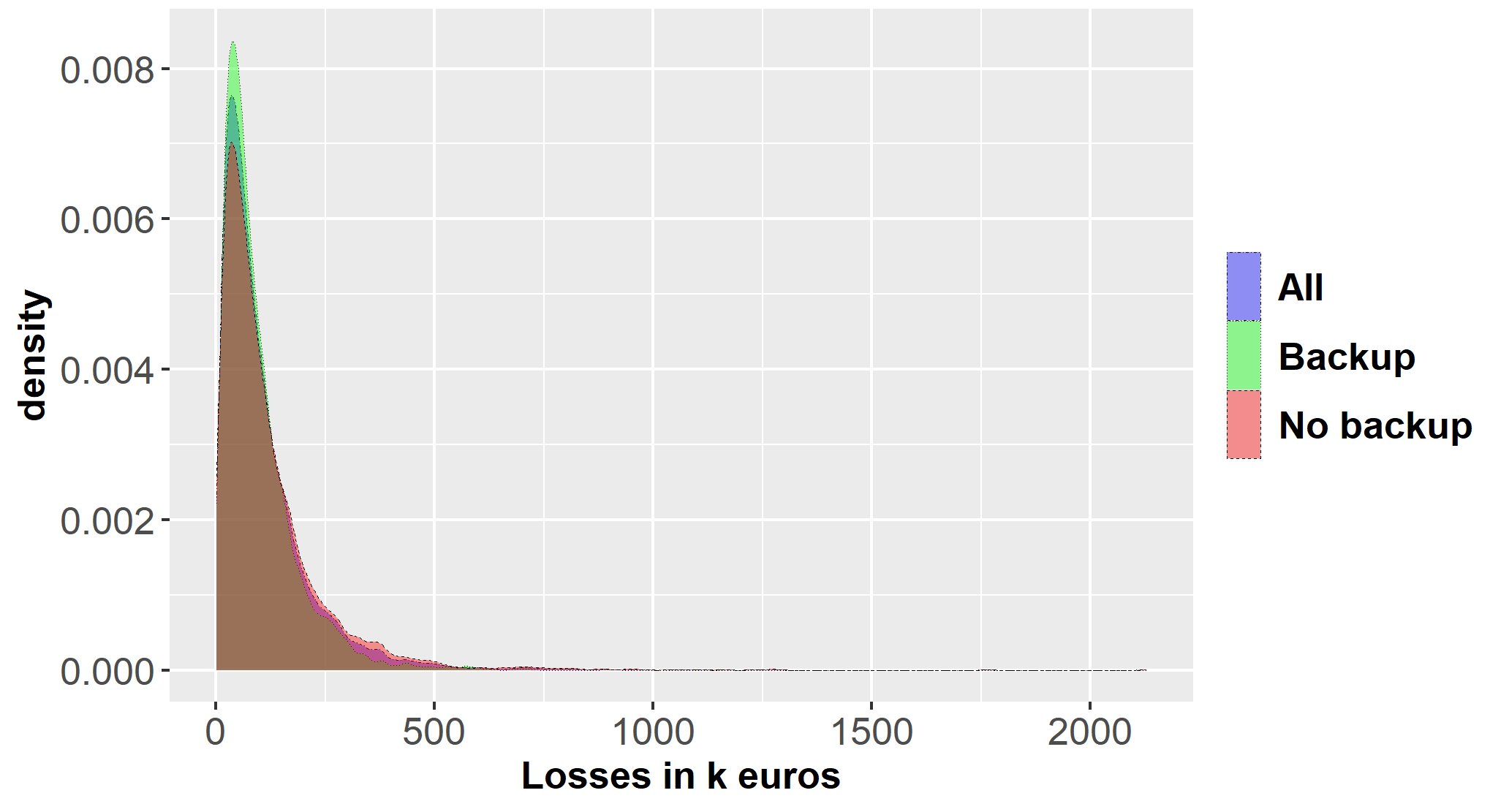}
        \caption{Financial losses in k euros.}
    \end{subfigure}
    \hfill
    \begin{subfigure}{0.48\textwidth}
        \centering
        \includegraphics[width=\linewidth]{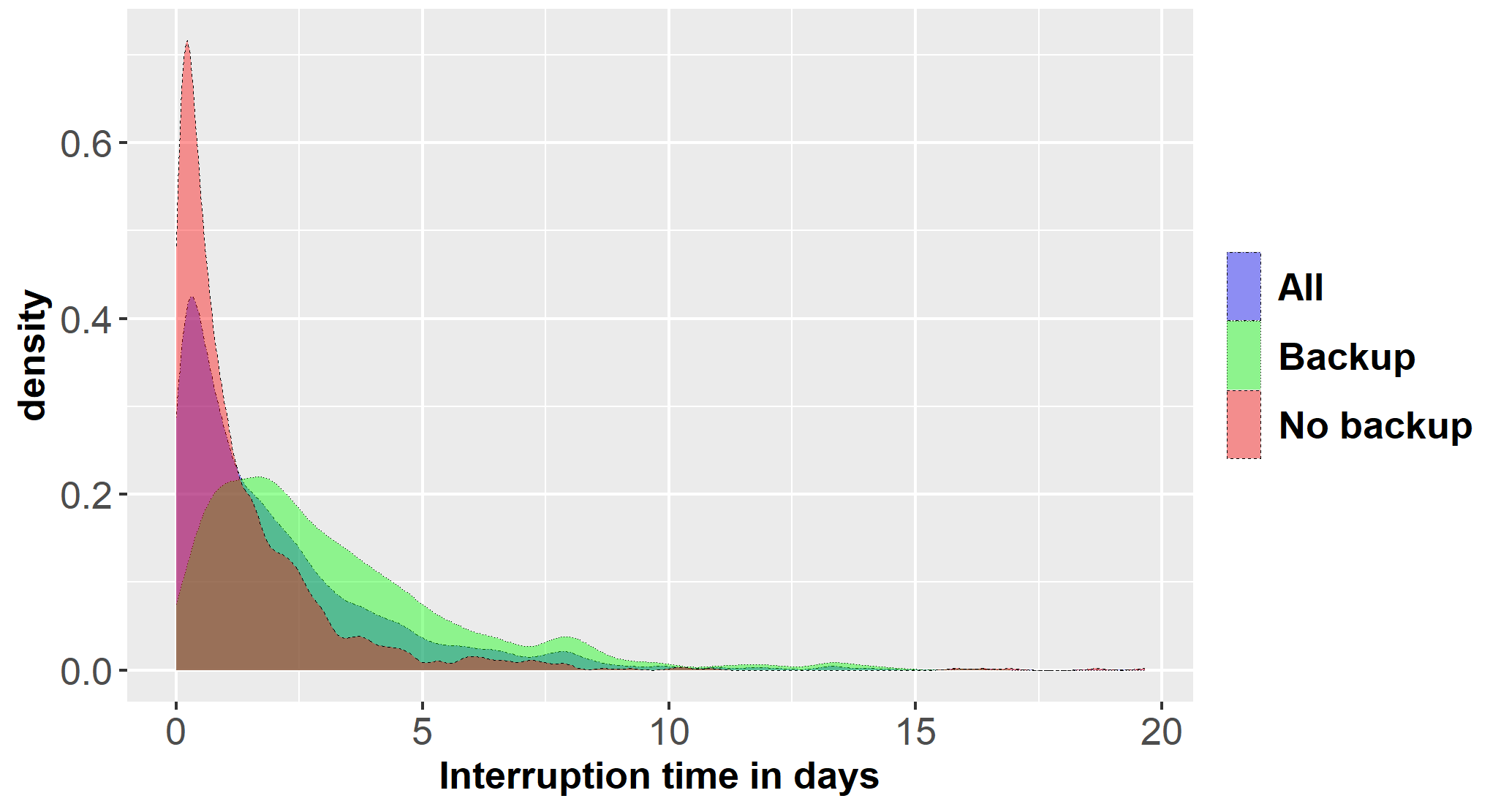}
        \caption{Duration of interruption in days.}
    \end{subfigure}
    \caption{Densities of losses due to cloud service interruptions (panel (a)) and densities of durations of interruption (panel (b)). The densities are plotted for all policyholders and separately for policyholders with backup plans and those without backup plans.}
    \label{fig:desities}
\end{figure}

\begin{table}[h!]
\centering
\begin{tabular}{lcccc}
\hline
\textbf{Coefficients} & \textbf{Estimate} & \textbf{Standard error} & \textbf{t value} & \textbf{p-value} \\
\hline
(Intercept)   & 4.30287  & 0.25261 & 17.033   & $< 2 \times 10^{-16}$*** \\
$T$   & 3.10957  & 0.02757 & 112.778  & $< 2 \times 10^{-16}$*** \\
$\delta$      & -2.16544 & 0.11037 & -19.620  & $< 2 \times 10^{-16}$*** \\
$\Lambda$           & -2.17856 & 0.03986 & -54.654 & $< 2 \times 10^{-16}$*** \\
$B$   & -3.53310 & 0.32735 & -10.793  & $< 2 \times 10^{-16}$*** \\
\hline
\multicolumn{5}{c}{\textit{\textbf{Significance codes:}} 0 ‘***’, 0.001 ‘**’, 0.01 ‘*’, 0.05 ‘.’, 0.1 ‘ ’} \\
\hline
\multicolumn{5}{c}{\textit{\textbf{Multiple R-squared:}} 0.5990 \quad \textit{\textbf{Adjusted R-squared:}} 0.5987} \\
\hline
\end{tabular}
\caption{Results of the linear regression of $Y$ on $\mathbf{W}$}
\label{tap:lm_regression}
\end{table}

\begin{table}[h!]
\centering
\begin{tabular}{lll}
\hline
\textbf{Model}                             & \textbf{Hyperparameter} & \textbf{Value} \\ \hline
\multirow{3}{*}{\textbf{Regression trees}} & complexity parameter    & 0.0005         \\
                                           & minsplit                & 20             \\
                                           & minbucket               & 7              \\ \hline
\multirow{4}{*}{\textbf{Random forests}}   & mtry                    & 5              \\
                                           & splitrule               & ``variance"     \\
                                           & min.node.size           & 10             \\
                                           & num.trees               & 200            \\ \hline
\multirow{7}{*}{\textbf{XGBoost}}          & nrounds                 & 400            \\
                                           & max\_depth              & 4              \\
                                           & eta                     & 0.3            \\
                                           & gamma                   & 1              \\
                                           & colsample\_bytree       & 0.8            \\
                                           & min\_child\_weight      & 1              \\
                                           & subsample               & 1              \\ \hline
\multirow{7}{*}{\textbf{Neural networks}}  & number of layers        & 2              \\
                                           & number of neurons       & {[}64, 32{]}   \\
                                           & batch\_size             & 64             \\
                                           & weight\_decay           & 0.0001         \\
                                           & dropout                 & 0\%            \\
                                           & learning rate           & 0.001          \\
                                           & loss                    & ``huber"        \\ \hline
\end{tabular}
\caption{Optimal hyperparameters of tested models}
\label{tab:optimhyparam}
\end{table}

\subsection{Proof of Proposition \ref{prop_existence}}
\label{sec_append_proof1}

Showing the result for $\tau=0$ is sufficient, since a higher value of $\tau$ reduces the expected utility of the traditional insurance contract.

If $\tau=0,$ condition (\ref{cond_achat}) is
$$E\left[\exp(\alpha[m_Y(\alpha|\mathbf{W})-\phi_{\beta}(\mathbf{W})])\right]\leq \exp\left(\alpha [1+\theta_Y-\beta-\beta\theta]E[Y]\right).$$
So taking the logarithm, we get
$$\log E\left[\exp(\alpha[m_Y(\alpha|\mathbf{W})-\phi_{\beta}(\mathbf{W})])\right]\leq \alpha [1+\theta_Y-\beta-\beta\theta]E[Y].$$
Finally,
\begin{equation}\label{partiel}\theta \beta \leq 1-\beta +\theta_Y-\frac{\log E\left[\exp(\alpha[m_Y(\alpha|\mathbf{W})-\phi_{\beta}(\mathbf{W})])\right]}{\alpha E[Y]}.
\end{equation}
Since $\theta\beta$ needs to be strictly positive, this requires
$$\frac{\log E\left[\exp(\alpha[m_Y(\alpha|\mathbf{W})-\phi_{\beta}(\mathbf{W})])\right]}{\alpha E[Y]}< 1-\beta +\theta_Y.$$
Since
$$\log E\left[\exp(\alpha[m_Y(\alpha|\mathbf{W})-\phi_{\beta}(\mathbf{W})])\right]< \sup_{\mathbf{w}\in \mathcal{W}}\frac{m_Y(\alpha|\mathbf{w})-\phi_{\beta}(\mathbf{w})}{E[Y]},$$
it follows from (\ref{cond_bound}) that this condition holds.
Then, taking $\theta\leq \eta(\alpha, \beta)\beta^{-1}$ ensures that (\ref{partiel}), hence (\ref{cond_achat}), holds.

\subsection{Proof of Proposition \ref{prop_solv1} and \ref{prop_solv2}}
\textbf{Proof of Proposition \ref{prop_solv1}.}

We have
$\pi^*_{\phi_{\beta}}=\beta E[Y],$ and 
$$\sigma^2_{\phi_{\beta}}=\beta^2 Var\left(E[Y|\mathbf{W}]\right)=\beta^2 \sigma^2.$$
From Corollary \ref{cor_extension}, which requires that $\theta\leq \eta \beta^{-1},$
$$n\geq n_0=N \mu\left((0,\alpha_0+h_{\beta}(\tau)]\right).$$
Condition (\ref{solvency}) holds if
$$\frac{n_0^{1/2}\theta\pi^*_{\phi_{\beta}}}{\sigma_{\phi_{\beta}}}\geq S^{-1}(\varepsilon).$$
This rewrites
$$\theta\geq \frac{\sigma S^{-1}(\varepsilon)}{N^{1/2} E[Y] \mu\left((0,\alpha_0+h_{\beta}(\tau)]\right)^{1/2}}.$$
To be compatible with the upper bound on $\theta,$ we need condition (\ref{magic}).

\textbf{Proof of Proposition \ref{prop_solv2}.}
\label{proof_solv}

The proof is similar to the proof of Proposition \ref{prop_solv1}, but with (\ref{nombre2}) replaced by (\ref{cond2}).
If $$\theta\geq \frac{a\sigma(1-\varepsilon'^{\gamma})}{\gamma\varepsilon'^{\gamma}},$$
we have
$$S^{-1}\left(\varepsilon - \frac{1}{\left(1+\frac{\gamma \theta}{a\sigma}\right)^{1/\gamma}}\right)\leq S^{-1}(\varepsilon-\varepsilon').$$ Again, we need the upper and lower bounds on $\theta$ to be compatible, which leads to the result of the Proposition.

\subsection{Risk aversion heredity property}

\begin{lemma}
\label{lemma_heredity}
Assume that condition (\ref{cond_achat}) holds for $\phi,$ $\alpha_0>0$ and with $\tau=0.$ Let $F(\alpha)=\Psi'(\alpha)=E\left[Y\exp(\alpha Y)\right].$
Then, (\ref{cond_achat}) holds for all $\tau$ and $\alpha \in (0,\alpha_0+h(\tau)]$ with
\begin{eqnarray*}
h(\tau)=F^{-1}\left(F(\alpha_0)\exp(-\tau)\exp(-\alpha[\pi_Y-\pi_{\phi}])\right)-\alpha_0.
\end{eqnarray*}
\end{lemma}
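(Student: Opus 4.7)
The plan is to study condition (\ref{cond_exp}) as a function of $\alpha$ and $\tau$ separately, exploiting the monotonicity of Laplace transforms and the convexity of cumulant generating functions. First I would rewrite the left-hand side of (\ref{cond_exp}) by conditioning on $\mathbf{W}$: $E\left[\psi_Y(\alpha|\mathbf{W})\exp(-\alpha\phi(\mathbf{W}))\right] = E\left[\exp(\alpha(Y - \phi(\mathbf{W})))\right]$, whose logarithm $K(\alpha):=\log E[\exp(\alpha(Y-\phi(\mathbf{W})))]$ is convex with $K(0)=0$. The condition then reads $K(\alpha) < \log \Psi_Y((1 - e^{-\tau})\alpha) + \alpha(\pi_Y - \pi_\phi)$, and I will work with this form throughout.

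A first easy observation is monotonicity in $\tau$: because $\Psi_Y$ is strictly increasing on $[0,\infty)$ and $(1-e^{-\tau})\alpha$ is non-decreasing in $\tau$, the right-hand side is non-decreasing in $\tau$ while the left-hand side is $\tau$-free. So if the condition holds at $(\alpha_0,0)$ it automatically holds at $(\alpha_0,\tau)$ for every $\tau\geq 0$. Next, for the range $\alpha\in(0,\alpha_0]$, I would use that $\alpha \mapsto K(\alpha)/\alpha$ is non-decreasing on $(0,\infty)$ --- a standard consequence of convexity through the origin --- to transport the hypothesis $K(\alpha_0)/\alpha_0 < \pi_Y-\pi_\phi$ (which is exactly (\ref{cond_exp}) at $\tau=0$ since $\Psi_Y(0)=1$) down to every smaller $\alpha$, and then combine this with the $\tau$-monotonicity just established.

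The main obstacle is the extension to $\alpha\in(\alpha_0,\alpha_0+h(\tau)]$, for which the explicit shape of $h(\tau)$ in terms of $F=\Psi_Y'$ must be produced. Here the extra room comes entirely from the positive term $\log\Psi_Y((1-e^{-\tau})\alpha)$ on the right-hand side, and must offset the deficit $K(\alpha)-K(\alpha_0)-(\alpha-\alpha_0)(\pi_Y-\pi_\phi)$ incurred by moving above $\alpha_0$. My plan is to bound $K$ from above by $\log\Psi_Y$ using $Y-\phi(\mathbf{W})\leq Y$ (assuming $\phi\geq 0$), and to compare $\Psi_Y$ at $\alpha_0+h$ with $\Psi_Y$ at $\alpha_0$ via a first-order argument in which the derivative $F$ naturally appears; strict monotonicity of $F$, a consequence of $F'(\alpha)=E[Y^2\exp(\alpha Y)]>0$, then makes $F^{-1}$ well defined. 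The threshold $\alpha_0+h(\tau)$ should emerge as the largest value at which the saturated comparison still admits a solution, and the stated closed form for $h(\tau)$ is recovered by inverting the resulting identity. The delicate part will be to perform the first-order bound tightly enough to reach exactly the displayed expression rather than a cruder sufficient condition, and to verify that the candidate $h(\tau)$ is indeed admissible along the whole interval $(\alpha_0,\alpha_0+h(\tau)]$ rather than only at its endpoint.
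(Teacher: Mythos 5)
Your handling of the two easy directions is correct and coincides with the paper's: monotonicity in $\tau$ follows because $\Psi_Y(\alpha')\geq \Psi_Y(0)=1$ while the left-hand side of (\ref{cond_exp}) is $\tau$-free, and the transport down to $\alpha\leq\alpha_0$ via the monotonicity of $\alpha\mapsto \log E[\exp(\alpha Z)]/\alpha$ (your $K(\alpha)/\alpha$ argument, with $Z=Y-\phi(\mathbf{W})-\pi_Y+\pi_{\phi}$ absorbing the premia) is exactly the paper's ``first case,'' whose useful output is the inequality $E[\exp(\alpha_0 Z)]\leq 1$.

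The genuine gap is the extension to $\alpha\in(\alpha_0,\alpha_0+h(\tau)]$, which you leave as a plan, and the plan as stated would not produce the displayed $h(\tau)$. Bounding $K$ from above by $\log\Psi_Y$ globally via $Y-\phi(\mathbf{W})\leq Y$ discards both the compensation term and, crucially, the slack already earned at $\alpha_0$: a global domination of $K$ cannot control the \emph{increment} $K(\alpha)-K(\alpha_0)$, which is the quantity you yourself identify as the deficit to be offset. The paper's mechanism is a two-sided first-order comparison anchored at $\alpha_0$: writing $\alpha=\alpha_0+x$, convexity of $\Psi$ gives $\Psi(\alpha')\geq \Psi(\alpha_0')+x(1-e^{-\tau})E[Y\exp(\alpha_0'Y)]$, while the derivative bound $E[Z\exp(tZ)]\leq \exp(-\alpha[\pi_Y-\pi_{\phi}])E[Y\exp(\alpha Y)]$ for $t\in[\alpha_0,\alpha]$ (obtained from $Z\leq Y-(\pi_Y-\pi_{\phi})$ inside the derivative, not from a bound on $K$ itself) yields $E[\exp(\alpha Z)]\leq E[\exp(\alpha_0 Z)]+x\exp(-\alpha[\pi_Y-\pi_{\phi}])F(\alpha)$. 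The zeroth-order terms are disposed of by $E[\exp(\alpha_0 Z)]\leq 1\leq \Psi(\alpha_0')$ (the conclusion of the first case), so everything reduces to comparing the first-order terms, and it is precisely this comparison that produces the condition $F(\alpha)\leq F(\alpha_0)\exp(-\tau)\exp(-\alpha[\pi_Y-\pi_{\phi}])$ which, inverted through $F^{-1}$, is the stated $h(\tau)$. This anchoring at $\alpha_0$ together with the derivative-level bound carrying the factor $\exp(-\alpha[\pi_Y-\pi_{\phi}])$ is the missing idea; without it your ``saturated comparison'' has no route to the exact closed form.
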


\begin{proof}
Since $\Psi(\alpha')\geq 1=\Psi(0),$ if (\ref{cond_achat}) holds for $\tau=0,$ it also holds for all $\tau>0.$ Hence we show the result in two steps. First, we show that condition (\ref{cond_achat}) holds for $\tau=0$ and all $\alpha\leq \alpha_0.$ Then we study the case $\alpha_0<\alpha\leq \alpha_0+h(\tau).$

\textbf{First case: $\alpha\leq \alpha_0.$}

Let $Z=Y-\phi(\mathbf{W})-\pi_Y+\pi_{\phi}.$
Condition (\ref{cond_exp}) rewrites
$$E\left[\exp(\alpha Z)\right]\leq \Psi(\alpha').$$
We have, as a consequence of Jensen's inequality,
$$\frac{\log E\left[\exp(\alpha Z)\right]}{\alpha}\leq \frac{\log E\left[\exp(\alpha_0 Z)\right]}{\alpha_0}.$$
Hence
$$ E\left[\exp(\alpha Z)\right]\leq 1.$$

\textbf{Second case: $\alpha_0<\alpha\leq \alpha_0+h(\tau).$}

Let $\alpha=\alpha_0+x$ for $0<x\leq h,$ and $\alpha'=\alpha (1-\exp(-\tau)).$
From a Taylor expansion,
$$\Psi(\alpha')\geq \Psi(\alpha'_0)+x(1-\exp(-\tau))E\left[Y\exp\left(\alpha_0'Y\right)\right].$$
On the other hand,
$$E\left[\exp(\alpha Z)\right]\leq E\left[\exp(\alpha_0 Z)\right]+x \exp(-\alpha\{\pi_{Y}-\pi_{\phi}\})E\left[Y\exp(\alpha Y)\right].$$

Since $E[\exp(\alpha_0Z)]\leq 1\leq \Psi(\alpha'_0),$, condition (\ref{cond_achat}) holds if $$\frac{F(\alpha)}{F(\alpha_0)}\leq \exp(-\tau)\exp(-\alpha[\pi_Y-\pi_{\phi}]).$$ By definition, this condition holds for $h\leq h(\tau).$
\end{proof}

\bibliographystyle{abbrvnat}

\end{document}